\newcommand{\remove}[1]{}
\newcommand{\trace}{\mathrm{Trace}}
\newcommand{\ra}{\rightarrow}
\def\F{\mathbb{F}}
\def\N{\mathbb{N}}
\newtheorem{definition}{Definition}
\newtheorem{thm}{Theorem}
\newtheorem{cor}{Corollary}
\newtheorem{claim}[thm]{Claim}
\newtheorem{lemma}[thm]{Lemma}
\newtheorem{proposition}[thm]{Proposition}
\newcommand{\eqdef}{{\stackrel{\rm def}{=}}}
\begin{document}
\title{Edge Transitive Ramanujan Graphs and \\Highly Symmetric LDPC Good Codes}
\author{Tali Kaufman \thanks {Bar-Ilan University, ISRAEL. Email: \texttt{kaufmant@mit.edu}.
Research supported in part by the Alon Fellowship.}
 \and Alexander Lubotzky \thanks {Hebrew University, ISRAEL. Email: \texttt{alexlub@math.huji.ac.il}.
Research supported in part by the ERC and by the Israel Science Foundation.}}
\maketitle
%\iffalse

\begin{abstract}
We present a symmetric LDPC code with constant rate and constant distance (i.e. good LDPC code) that its constraint space is generated
by the orbit of one constant weight constraint under a group action. Our construction provides the first symmetric LDPC good codes.
This solves the main open problem raised by Kaufman and Wigderson in~\cite{KW}.
\end{abstract}

\section{Introduction}
An $(n,k,d)$-code is a subspace $C$ of $\F_2^X$, where $|X|=n$, of dimension $k$ such that the (Hamming) weight of every vector
$0 \neq v \in C$ is at least $d$. A code (or rather a family of codes, when $n \rightarrow \infty$) is called {\em good} if there exists
an $\epsilon > 0$ such that $r(C)=\frac{k}{n}$ (the rate) and $\delta(C)=\frac{d}{n}$ (normalized distance) are both at least $\epsilon$.
For a code $C \subseteq \F_2^X$ we denote by $C^{\bot}$ its dual (i.e. all vectors "orthogonal" to $C$) and we think of its vectors
as the constraints defining the code (i.e. they define linear functionals on $\F_2^X$ whose common set of solutions is $C$). The code $C$ is called {\em LDPC} if there exists a set of defining constraints of bounded weight. This bound is called the density of the code.

The code $C$ is said to be {\em symmetric} (w.r.t. $H$) if there exists a group $H$ acting {\em transitively} on $X$ such that the induced action on $\F_2^X$ preserves $C$. The code $C$ is called {\em single-orbit symmetric} if in addition there exists $v \in C^{\bot}$ such that $C^{\bot}$ is spanned by the orbit $H \cdot v$ (i.e. $C$ is defined by the equation $x \cdot v = 0$ and its translations by $H$). We say that $C$ is {\em highly symmetric} if furthermore the vector $v$ can be chosen to have a bounded weight (which, in particular, implies that $C$ is an LDPC).
Many of the codes studied in coding theory are symmetric and even single-orbit symmetric (though not necessarily highly-symmetric), e.g., all the cyclic codes (see section~\ref{section:cyclic} below). But, unfortunately cyclic codes do not tend to have the other desired good properties. For example, it is a long standing conjecture that there are no good cyclic codes. An old result of Berman from $1967$~\cite{BERMAN} proved it for infinitely many code lengths. In~\cite{BSS05} Babai, Shpilka and Stefankovic show that cyclic good LDPC codes do not exist.

%Let $H$ be a group acting as permutations of $X$. This action induces an action of $H$ on $\F_2^X$.
%We say that $C$ is {\em $H$-invariant} if $H$ preserves $C$.
%The code is {\em $t$-symmetric} (under $H$) if there exists $v_1, \cdots, v_t \in C^{\bot}$ ($t$ constant) such that $C^{\bot}$ is spanned by the $t$-orbits $H \cdot v_i$ (i.e. $C$ is defined by $t$ equations and their translations by $H$). We say that $C$ is {\em highly-symmetric} (under $H$) if $t=1$.

In~\cite{KW} Kaufman and Wigderson initiated the study of highly symmetric LDPC codes. The reader is referred to their paper for motivation.
The main question presented there is:
{\bf "To what extent can symmetric LDPC codes attain (or even come close to) the coding theory gold standards of
linear distance and constant rate?"}

The authors consider the tradeoff between $1/$rate and the density in symmetric codes. In the codes known prior to their work, if one was constant then the other was worst possible. They constructed a symmetric code with better tradeoff between $1/$rate and the density, namely a symmetric code with constant rate, nearly constant distance, whose density is poly logarithmic in the code length. Thus, their code satisfy some of the gold standards but not all. Moreover, they show that if the group $H$ is abelian or solvable (of bounded derived length) there are no codes satisfying all the desired properties and expressed some skepticism if such codes exist at all.

Our main result is on the optimistic side and give:

\begin{thm}[main]\label{theorem:main}
There exist explicit highly symmetric LDPC good codes.
\end{thm}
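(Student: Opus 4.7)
The natural route is to instantiate the Cayley/Tanner code framework from the authors' earlier work on a carefully chosen edge-transitive Cayley expander over a non-solvable group, so that the forbidden regime of the solvable lower bound is avoided. Concretely, the plan is to pick a tuple $(G, S, H, B)$ consisting of a group $G$, a constant-size symmetric generating set $S$, an auxiliary group $H$ acting on $G$ by automorphisms and transitively on $S$, and an inner code $B \subseteq \F_2^{S}$ which is single-orbit $H$-symmetric with rate $r_B > 1/2$ and distance $\delta_B$ large enough compared to the spectral gap of $\mathrm{Cay}(G,S)$. The Cayley/Tanner code on $\mathrm{Cay}(G,S)$ with inner code $B$ is then, by the earlier framework, invariant under the semi-direct product $G \rtimes H$ acting transitively on edges; by standard Sipser--Spielman analysis applied to a Ramanujan (hence strongly expanding) base graph, the outer code has rate $\geq 2r_B - 1 > 0$ and constant relative distance. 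Bounded density follows because the defining constraints are local views at vertices, each of weight at most the regularity degree $|S|$, which is constant; single-orbit generation of the constraint space follows from the fact that $H$ is transitive on $S$ and $B$ is single-orbit $H$-symmetric, so one constraint at one vertex, translated around by $G \rtimes H$, covers all vertex constraints.

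Concretely, for $G$ I would take a non-abelian finite simple (or near-simple) group arising as a quotient of an arithmetic lattice of $\mathrm{PGL}_2$, so that LPS-style constructions furnish a constant-size set $S$ making $\mathrm{Cay}(G,S)$ Ramanujan. The group $H$ should come from the natural action on $S$ by its labeling structure (for instance the symmetries of the underlying quaternion data used to define $S$), acting on $G$ through the outer automorphism/conjugation structure so that the resulting semi-direct product genuinely sits inside $\mathrm{Aut}(\mathrm{Cay}(G,S))$ and is edge-transitive. Because $|S|$ is a constant, the inner code $B$ lives on a constant number of coordinates, so the requirements of constant $r_B$ and $\delta_B$ are easy to meet; the only nontrivial demand is that some nontrivial $B \subsetneq \F_2^S$ be $H$-invariant and generated by a single $H$-orbit of a short vector, which one can arrange by taking $B$ to be the dual of the $H$-span of an elementary vector (or a sum of a small orbit), and choosing the specific $H$-representation on $S$ so that the corresponding character decomposition leaves a proper nonzero invariant subspace.

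With these pieces in hand, verification splits into four independent checks. First, symmetry: invoke the Cayley/Tanner framework to conclude that $G \rtimes H$ acts transitively on the $n = |G|\cdot|S|/2$ edges and preserves the code, and observe that the generating $H$-orbit of the inner constraint lifts to a single $G \rtimes H$-orbit of constraints of bounded weight. Second, LDPC: each constraint has weight at most $|S|$, which is an absolute constant. Third, rate: since $\mathrm{Cay}(G,S)$ is an expander and $r_B > 1/2$, the Tanner rate bound $r \geq 2r_B - 1$ is a strictly positive constant. Fourth, distance: the Sipser--Spielman/expander mixing bound gives $\delta \geq \bigl((\delta_B - \lambda/|S|)/(1-\lambda/|S|)\bigr)^2$, a strictly positive constant whenever $\delta_B > \lambda/|S|$, which is guaranteed by the Ramanujan bound on $\lambda$.

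The main obstacle, and the heart of the construction, is the simultaneous requirement on $(G,S,H,B)$: one needs $H$ to both act by automorphisms of $G$ (so that $G \rtimes H$ sits inside the isometry group of $\mathrm{Cay}(G,S)$) and act transitively on $S$ (so that the single orbit property lifts to the whole code), while keeping $G$ non-solvable, $|S|$ constant, and $\mathrm{Cay}(G,S)$ Ramanujan. Packaging an edge-transitive LPS-type Ramanujan construction into this form is where the real work lies; the rest of the proof is bookkeeping within the Tanner code framework.
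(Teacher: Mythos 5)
Your overall strategy is the paper's own: run the Kaufman--Wigderson Cayley/Tanner framework (Theorem~\ref{thm:Cayley-codes-detailed}) on an edge-transitive Ramanujan Cayley graph $Cay(G,S)$ in which $S$ is a single orbit of a small group $T$ acting on $G$ by automorphisms, and feed in a $T$-single-orbit inner code $B$ with $r(B)>\frac12$ and $\delta(B)>\lambda$. The paper instantiates this with $G=PSL_2(q^{\alpha})$ or $PGL_2(q^{\alpha})$, $T=\F^*_{q^2}/\F^*_q$ the non-split torus, $S$ the $T$-conjugacy orbit of an explicit element coming from the Lubotzky--Samuels--Vishne congruence quotients, and $B$ a cyclic code of length $q+1$ built from BCH codes.

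The gap is that your proposal defers exactly the two steps that constitute the actual proof, and the concrete suggestions you do make for them would not work as stated. First, the graph: you need a family $Cay(G,S)$ that is Ramanujan \emph{and} in which $S$ is a single orbit of a fixed-order group $T$ acting on $G$ by automorphisms with $S=S^{-1}$. The classical LPS generators (quaternions over $\Z$) are not presented as one orbit of such a $T$; the paper has to pass to the function-field construction of \cite{LSV2}, where the lattice $\Gamma$ is generated by the $q+1$ conjugates $ubu^{-1}$ of a single element $b=1+z^{-1}$ under $\F^*_{q^2}/\F^*_q$, this torus normalizes $\Gamma$ and every congruence subgroup $\Gamma(I)$, and the quotients give $PSL_2(q^e)$ or $PGL_2(q^e)$ with the Ramanujan property supplied by Drinfeld's theorem. ``Symmetries of the underlying quaternion data'' is a gesture at this, not an argument. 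Second, the inner code: since $T$ is cyclic of order $q+1$, every $T$-invariant code on $S$ is automatically single-orbit (the PID argument in Section~\ref{section:cyclic}), so the single-orbit demand you flag as ``the only nontrivial'' one is in fact free; the genuinely nontrivial demand is quantitative, namely a \emph{cyclic} code of the \emph{even} length $q+1$ with $r(B)>\frac12$ and $\delta(B)>\frac{2\sqrt q}{q+1}$. Your specific recipe -- take $B$ to be the dual of the $T$-span of an elementary vector -- fails outright: the $T$-orbit of a weight-one vector under a transitive action spans all of $\F_2^S$, so its dual is $0$; and ``some proper nonzero invariant subspace'' from a character decomposition carries no rate or distance guarantee. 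The paper meets the quantitative demand by choosing $q=2^{m+1}-3$ a prime power (e.g.\ $q=4093$), taking $BCH(m,r)$ of odd length $2^m-1$, and doubling it to length $q+1=2(2^m-1)$ via van Lint's interleaving map, with a verification (Lemma~\ref{lem:even-cyclic}) that the doubled code is again cyclic, generated by $h^2(x)$. Without these two constructions your argument is a correct reduction of Theorem~\ref{theorem:main} to the paper's Theorems~\ref{theorem:ramanujan} and~\ref{theorem:cyclic-code}, but not a proof of it.
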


So, our codes meet all the "gold standards" of coding theory. Our code $C$ have:
\begin{itemize}
\item Constant rate.
\item Constant relative distance.
\item Symmetric under a group action $H$.
\item The dual code $C^{\bot}$ is generated by a single orbit $H \cdot v$.
\item The above $v$ can be chosen to have bounded weight. In particular, $C$ is also LDPC.
\end{itemize}

Our constructions are of Cayley codes, in the framework of~\cite{KW}, but with different groups and different generators.
Cayley codes are defined in~\cite{KW} as follows.
Let $G$ be a group of order $m$ and $S$ a symmetric (i.e. $S=S^{-1}$) set of generators of order $t$. Let $Cay(G,S)$ be the (right) Cayley graph of $G$ w.r.t. $S$ and $E$ its set of edges. So $|E| = mt/2$. Assume $B \subseteq \F_2^S$ is a linear code. Let $C(G,S,B)$ be the linear subspace of $\F_2^E$ containing all the functions $f:E \rightarrow \F_2$ such that for every $g \in G$ the "local view" of $f$ at the star of $g$ is in $B$, i.e. the function $f_g:S \rightarrow \F_2$ given by $f_g(s) := f((g,gs))$ is in $B$.

In general, Cayley codes are {\em not} symmetric. The group $G$ acts transitively on the vertices of $Cay(G,S)$ but not on its edges. The Cayley code is symmetric in the following situation. Let $G$ be a group generated by a symmetric set $S$ (i.e. $S=S^{-1}$) and $T$ a group acting on $G$ (i.e., there exists a homomorphism $\varphi:T \rightarrow Aut(G)$). Assume that $S$ is an orbit of the action, namely, there exists $\gamma \in G$ such that
$$ S = \{ \varphi(\alpha)(\gamma) | \alpha \in T \}.$$
In this case one can show, see Section~\ref{sec:cayley}, that the semi-direct product group $H=G \rtimes T$ acts on $Cay(G,S)$ and this action is transitive on the edges.

Now, the group $T$ acts on $S$ and hence on $\F_2^S$. If the "small code" $B$ is preserved by $T$ (i.e. $B$ is symmetric w.r.t $T$), then $H = G \rtimes T$ preserves $C(G,S,B)$. Moreover, if $B$ is single-orbit symmetric (w.r.t. $T$), e.g., if $T$ is a cyclic group and $B$ is a cyclic code, then $C(G,S,B)$ is also single-orbit symmetric, and in this case it is also automatically highly symmetric when $|S|$ is bounded. (See Section~\ref{sec:cayley} below).

The following theorem (proved in~\cite{KW} inspired by~\cite{Tan81} and~\cite{SS96}), estimates the parameters of $C=C(G,S,B)$ in terms of those of $B$ and the eigenvalues of the graph $Cay(G,S)$.

\begin{thm}\label{thm:Cayley-codes}
Let $G,S,T,B$ as above, with $B$ a code in $\F_2^S$ with rate $r(B)$ and normalized distance $\delta(B)$. Then $C=C(G,S,B)$ is a code with $r(C) \geq 2 r(B) -1$ and $\delta(C) \geq [(\delta(B) - \lambda)/(1 - \lambda)]^2$ where $\lambda = \lambda (Cay(G,S))$ is the second largest normalized eigenvalue of the Cayley graph $Cay(G,S)$.
\end{thm}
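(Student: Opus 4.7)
My plan is the standard Tanner / Sipser--Spielman analysis specialized to the Cayley-graph setting, proving the two bounds separately.

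\textbf{Rate.} I count constraints. The condition $f_g \in B$ at the vertex $g$ imposes $\mathrm{codim}(B) = (1-r(B))t$ linear equations on the incident star of edges. Ranging over all $m = |G|$ vertices, $C=C(G,S,B)$ is cut out by at most $m(1-r(B))t = 2(1-r(B))|E|$ linear equations in $\F_2^E$ (since $|E| = mt/2$). Hence $\dim C \ge |E| - 2(1-r(B))|E|$, so $r(C) \ge 2r(B) - 1$.

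\textbf{Distance.} Fix a nonzero $f \in C$ with support $F \subseteq E$, and let $V' \subseteq G$ be the set of vertices whose local view $f_g$ is nonzero. Two observations drive the argument. First, for each $g \in V'$ we have $f_g \in B \setminus \{0\}$, hence $wt(f_g) \ge \delta(B)\, t$; summing over $V'$ and noting that each edge lies in the stars of its two endpoints gives
$$2|F| \;=\; \sum_{g \in V'} wt(f_g) \;\ge\; \delta(B)\cdot t\cdot |V'|.$$
Second, every edge of $F$ forces a nonzero local view at both endpoints, so $F$ is contained in the edge set induced by $V'$, giving $|F| \le e(V',V')$.

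Setting $\beta := |V'|/m$, I would apply the sharp form of the expander mixing lemma to the $t$-regular graph $Cay(G,S)$, whose normalized second eigenvalue is $\lambda$:
$$e(V',V') \;\le\; \tfrac{tm}{2}\bigl(\beta^2 + \lambda\,\beta(1-\beta)\bigr) \;=\; \tfrac{tm}{2}\,\beta\bigl(\beta(1-\lambda) + \lambda\bigr).$$
Combining with $|F| \ge \tfrac{1}{2}\delta(B)\,t\,\beta\,m$ and cancelling common factors yields $\delta(B) \le \beta(1-\lambda) + \lambda$, i.e.
$$\beta \;\ge\; \frac{\delta(B) - \lambda}{1-\lambda}.$$
Finally $\delta(C) = |F|/|E| \ge \delta(B)\,\beta$, and since $\delta(B) \ge (\delta(B)-\lambda)/(1-\lambda)$ (a restatement of $\delta(B) \le 1$), we conclude $\delta(C) \ge [(\delta(B)-\lambda)/(1-\lambda)]^2$.

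\textbf{Main obstacle.} The only real subtlety is invoking the tight form of the expander mixing lemma (the one carrying the extra factor $(1-|V'|/m)$) in order to pick up the $(1-\lambda)$ denominator; the textbook version would only give the weaker $\delta(C) \ge \delta(B)(\delta(B)-\lambda)$. Beyond that, the argument is pure bookkeeping, where one must carefully track the factor of two between edges and their two incident stars.
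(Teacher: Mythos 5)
Your proof is correct and is essentially the standard Tanner/Sipser--Spielman argument that the paper itself does not reproduce but defers to \cite{KW} (Theorem 7): constraint counting for the rate, and the tight expander mixing bound $2e(V')\le t|V'|\bigl(\beta+\lambda(1-\beta)\bigr)$ applied to the set of vertices with nonzero local view for the distance. The only point worth flagging is that your last step, $\delta(B)\ge(\delta(B)-\lambda)/(1-\lambda)$, is a restatement of $\delta(B)\le 1$ only when $\lambda\ge 0$; this does hold in the present setting since $Cay(G,S)$ is a connected regular graph that is not complete, so its second largest eigenvalue is nonnegative.
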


\begin{cor}\label{cor:Cayley-codes} In the notations above; if $r(B) > \frac{1}{2}$ and $\delta(B) > \lambda (Cay(G,S))$ then $C(G,S,B)$ is a good code. If in addition $B$ is single-orbit symmetric, then so is $C=C(G,S,B)$. Hence, if $|G| \rightarrow \infty$ and $|S|$ is bounded $C$ is highly symmetric.
\end{cor}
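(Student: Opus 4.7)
The plan is to deduce all three claims directly from Theorem \ref{thm:Cayley-codes} combined with the $H = G \rtimes T$ action on $Cay(G,S)$ described just before it.

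For goodness, I would simply plug into Theorem \ref{thm:Cayley-codes}. Since $r(B) > 1/2$, one gets $r(C) \geq 2r(B) - 1 > 0$, a positive constant determined by $B$ alone. Connectivity of the Cayley graph gives $\lambda := \lambda(Cay(G,S)) < 1$, and the hypothesis $\delta(B) > \lambda$ makes $\delta(C) \geq [(\delta(B) - \lambda)/(1 - \lambda)]^2$ a positive constant. Hence $r(C)$ and $\delta(C)$ are bounded below by constants depending only on $B$ and on the spectral gap, so $C$ is good.

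For the symmetry transfer, I would first rewrite $C^{\bot}$ explicitly. By construction, the defining equations of $C$ are the ``local'' ones: for each vertex $g \in G$ and each $w \in B^{\bot}$, the equation $\sum_{s \in S} w(s) f_g(s) = 0$. Encode this as a vector $\tilde w^{g} \in \F_2^E$ supported on the star of $g$ with $\tilde w^{g}((g, gs)) = w(s)$. Then
\[ C^{\bot} = \mathrm{span}\{\tilde w^{g} : g \in G,\ w \in B^{\bot}\}. \]
Now suppose $B^{\bot}$ is the $T$-span of a single $w_0$ and set $v := \tilde w_0^{\,e}$, the localization at the identity. The $G$-translation part of $H$ sends $v$ to $\tilde w_0^{\,g}$ as $g$ varies over $G$. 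Unpacking the $H$-action, each $\varphi(\alpha) \in \mathrm{Aut}(G)$ fixes $e$ and permutes $S$ by $s \mapsto \varphi(\alpha)(s)$, so $\alpha \cdot v = \widetilde{(\alpha \cdot w_0)}^{\,e}$; that is, the $T$-factor ``rotates'' the star-of-$e$ constraint exactly as it acts on $B^{\bot}$. Combining, the $H$-orbit of $v$ sweeps out $\{\tilde w^{g} : g \in G, w \in T \cdot w_0\}$, whose span equals $C^{\bot}$. Thus $C$ is single-orbit symmetric.

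Finally, $\mathrm{wt}(v) = \mathrm{wt}(w_0) \leq |S|$, so if $|S|$ is bounded while $|G| \to \infty$, the generating constraint has bounded weight in a code of length $|E| = |G||S|/2 \to \infty$, which is exactly the ``highly symmetric'' condition. The only real obstacle in this proposal is the compatibility check that the $T$-factor of $H$ acts on the star of $e$ by the same permutation of $S$ that it uses to act on $B^{\bot}$; everything else is either substitution into Theorem \ref{thm:Cayley-codes} or straightforward bookkeeping.
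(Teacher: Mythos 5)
Your proposal is correct and follows essentially the same route as the paper: goodness is read off from Theorem~\ref{thm:Cayley-codes} (with $\lambda<1$ by connectivity), and single-orbit/high symmetry comes from the locality of the vertex constraints together with the $G\rtimes T$ action of Definition~\ref{Def:SDP-action}, exactly as in the paper's proposition that the $H$-orbit of one star constraint of weight at most $|S|$ spans $C^{\bot}$. The compatibility check you flag is the same direct calculation the paper treats as immediate from the definition of the semi-direct product action.
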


The last corollary gives the framework to prove Theorem~\ref{theorem:main}. We will present first edge transitive Cayley graphs $Cay(G,S)$ (where $S$ is the orbit of a cyclic group $T$ of a fixed size $q+1$ acting on $G$, when $|G| \rightarrow \infty$) with $\lambda = \lambda (Cay(G,S))$ sufficiently small. Secondly, we will find a cyclic code $B \subseteq \F_2^S$ with $r(B) > \frac{1}{2}$ and $\delta(B) > \lambda$. The resulting $C(G,S,B)$ will be highly symmetric and good by Corollary~\ref{cor:Cayley-codes}. In particular it is also LDPC as it is defined by local equations (in fact the orbit of one equation under the group $H=G \rtimes T$) which touches at most $q+1$ variables. So, Theorem~\ref{theorem:main} will be proven once the two goals will be achieved.

For the first mission let us recall that a finite $q+1$-regular connected graph is called Ramanujan if for every normalized eigenvalue $\lambda$ either $|\lambda|=1$ or $|\lambda| \leq  \frac{2\sqrt{q}}{q+1}$. Such graphs where constructed in~\cite{LPS} for every prime $p$ and for every prime power $q=p^{\ell}$ in~\cite{Mor}. But we will make use of a more recent explicit construction of Ramanujan graphs
by Lubotzky, Samuels and Vishne~\cite{LSV2} which have some extra symmetry, and in particular are edge transitive.

\begin{thm}[Edge Transitive Ramanujan graphs Theorem]~\label{theorem:ramanujan}
For a prime power $q$ and for $\alpha \in \N$ such that $q^{\alpha} >17$, let $G=PSL_2(q^{\alpha})$ or $G=PGL_2(q^{\alpha})$ and let $T$ be the non split tori of order $q+1$ in $PGL_2(q)$.
There exists $\gamma \in G$ such that $Cay(G,S)$ is a $q+1$-Ramanujan graph with $S=\{t \gamma t^{-1} | t \in T\}$, $|S|=q+1$, in particular, the graph $Cay(G,S)$ is edge-transitive. The element $\gamma$ will be explicitly defined in section~\ref{section:ramanujan}.
In the case that $G=PGL_2(q^{\alpha})$ the graph $Cay(G,S)$ is bi-partite.
\end{thm}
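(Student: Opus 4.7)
The plan is to derive the theorem from the explicit arithmetic construction of Ramanujan graphs in Lubotzky--Samuels--Vishne~\cite{LSV2}. Starting from a quaternion division algebra $D$ over the rational function field $\F_q(y)$ with a carefully chosen ramification set, one obtains via the Bruhat--Tits tree of $PGL_2(\F_q((y)))$ and strong approximation a family of $(q+1)$-regular graphs, each realizable as a Cayley graph on $PGL_2(\F_{q^{\alpha}})$ or on $PSL_2(\F_{q^{\alpha}})$. The element $\gamma \in G$ will be produced as the image of a specific element of $D^{\times}$ under the local isomorphism $D_v \cong M_2(\F_q((y)))$ followed by reduction modulo $y$; the explicit formula is to be spelled out in Section~\ref{section:ramanujan}. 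The Ramanujan bound on the eigenvalues is then inherited directly from the main theorem of~\cite{LSV2}, which rests ultimately on Drinfeld's proof of the Ramanujan conjecture for $GL_2$ over global function fields.

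The structural observation that forces the generating set to have the claimed torus-orbit form is that the non-split torus $T$ of order $q+1$ in $PGL_2(\F_q)$ acts simply transitively on the projective line $\mathbb{P}^1(\F_q)$: indeed $|T| = |\mathbb{P}^1(\F_q)| = q+1$, and the two fixed points of $T$ on $\mathbb{P}^1$ are Galois-conjugate over $\F_{q^2}$ and hence not defined over $\F_q$. Since the $q+1$ neighbors of a vertex of the Bruhat--Tits tree are canonically indexed by $\mathbb{P}^1(\F_q)$, the $q+1$ Cayley-graph generators form a single $T$-orbit under conjugation. Choosing $\gamma$ to correspond to any one of these neighbors therefore yields $S = \{t\gamma t^{-1} : t \in T\}$ of cardinality exactly $q+1$ (the stabilizer of $\gamma$ inside $T$ is trivial because $T$ acts freely on $\mathbb{P}^1(\F_q)$). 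Edge-transitivity is then immediate: $G$ acts transitively on vertices by left multiplication, while the $T$-conjugation action fixes the identity vertex and permutes $S$ transitively, so the combined semidirect product action of $H = G \rtimes T$ (defined as in Section~\ref{sec:cayley}) is transitive on oriented edges.

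For the bipartite statement when $G = PGL_2(\F_{q^{\alpha}})$, it suffices to exhibit a surjective homomorphism $\chi : G \to \Z/2\Z$ with $\chi(S) = \{1\}$. The natural candidate comes from the bipartition of the Bruhat--Tits tree itself, encoded by the parity of the valuation of the determinant of any $GL_2$-lifting; at the finite level this descends to a character $\chi$ whose kernel is the image of $PSL_2(\F_{q^{\alpha}})$. Since the distinguished $\gamma$ corresponds to a single step along the tree it changes this parity, so $\chi(\gamma) = 1$; and since $T \subseteq PGL_2(\F_q)$ lies inside $\ker \chi$, one obtains $\chi(t\gamma t^{-1}) = \chi(\gamma) = 1$ for every $t \in T$. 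Hence all of $S$ lies in the non-trivial coset of $\ker \chi$, and the Cayley graph is bipartite.

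The main obstacle is not in the symmetry bookkeeping above but in pinning down the LSV2 construction so that the torus action on $G$ visibly preserves and acts transitively on the generating set: one must check that the specific $\gamma$ produced in Section~\ref{section:ramanujan} really does generate the full $T$-orbit of size $q+1$ (as opposed to a proper sub-orbit), and that the Cayley graph thus built really is isomorphic to the arithmetic quotient to which the Drinfeld--Ramanujan bound applies. Once this identification is in place, everything else reduces to the elementary facts that a non-split torus acts simply transitively on its defining projective line and that the determinant-valuation parity gives the bipartition of the Bruhat--Tits tree.
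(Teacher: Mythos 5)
Your overall route is the same as the paper's: both treat Theorem~\ref{theorem:ramanujan} as a repackaging of the explicit construction of \cite{LSV2} (congruence quotients $\Gamma/\Gamma(I)$ of the Cartwright--Steger lattice acting on the Bruhat--Tits tree of $PGL_2(\F_q((y)))$, with the Ramanujan bound inherited from Drinfeld's theorem), plus the observation that the non-split torus $T=\F^*_{q^2}/\F^*_q$ normalizes the generating set and acts simply transitively on it, so that $S$ is a single $T$-orbit of the explicit element $\gamma$ (the image of $b=1+z^{-1}$) and $G\rtimes T$ acts edge-transitively. Your argument that $T$ acts freely on $\mathbb{P}^1(\F_q)$, hence the stabilizer of $\gamma$ in $T$ is trivial and $|S|=q+1$, is correct and matches the paper's tree picture.

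The one step that is wrong as stated is the bipartiteness argument. You claim that the parity of the valuation of the determinant (the bipartition of the tree, which $b$ indeed reverses) ``descends'' to the finite quotient and forces $\chi(\gamma)=1$. It does not: whether the image of $\gamma$ in $PGL_2(q^{\alpha})$ lies in $PSL_2(q^{\alpha})$ is not determined by the local parity of $b$ (the same global element is used in all cases); it depends on the choice of the ideal $I$, concretely on whether the image of $\frac{y}{1+y}$ in $R/I$ is a quadratic residue (Corollary 6.8 of \cite{LSV2}). If your implication were valid, the case $G=PSL_2(q^{\alpha})$ --- which the theorem asserts is also realizable, via Theorems 6.6 and 7.1 of \cite{LSV2} and the hypothesis $q^{\alpha}>17$ that allows choosing $I$ to produce either group --- could never occur. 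The correct (and simpler) argument, which is the paper's, is internal to the finite group: all elements of $S$ are conjugates of $\gamma$, hence lie in a single coset of the normal subgroup $PSL_2(q^{\alpha})$; in the case $G=PGL_2(q^{\alpha})$ they generate $G$, so this coset must be the non-trivial one, and since $PGL_2(q^{\alpha})/PSL_2(q^{\alpha})\cong \Z/2\Z$ the graph is bipartite. Relatedly, your proposal leaves implicit how one guarantees that the congruence quotient is exactly $PSL_2(q^{\alpha})$ or $PGL_2(q^{\alpha})$ for the prescribed $\alpha$; this is precisely where the paper invokes Theorems 6.6 and 7.1 of \cite{LSV2}.
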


The above theorem will give us the desired Cayley graphs.
For $B$ we will make a very special choice. Assume now $m \geq 10$, $q = 2^{m+1} - 3$ and assume that $q$ is a prime power.
For example, one can take $m=11$, $q=4093$ (so $q$ is prime in this case).

\begin{thm}[Good $B$ Theorem]\label{theorem:cyclic-code}
Let $q \in \N$ be a prime power such that $q+1 =2^{m+1} - 2 = 2(2^m-1)$ for $m\geq 10$.
Then there exists an explicit linear binary cyclic code $B \subseteq \F_2^{q+1}$ with $r(B)  > \frac{1}{2}$ and $\delta(B) > \frac{2 \sqrt{q}}{q+1}$.
%Let $a > 2$ be a constant. Let $S$ be a set of size $q+1 =2^{r+1} - 2 = 2(2^r-1)$ where $q$ is a prime power and $r\geq 10$. Then there exists an explicit cyclic binary code $B \subseteq \F_2^{S}$ with rate $r_B \geq \frac{1}{2} + \frac{1}{a}> \frac{1}{2}$ and normalized constant distance greater than
%$\frac{1 } {2\log (2^r-1) (a/(a-2))}$.
%$$ (\frac{\frac{2^r -1 } {\log (2^r-1) (a/(a-2))} - 2\sqrt{2^{r+1} -3}}{(2^{r+1}-2) -2\sqrt{2^{r+1} -3} })^2.$$
\end{thm}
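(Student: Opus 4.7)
\textbf{The plan} is to build $B$ by ``CRT-lifting'' two copies of a BCH code of length $n' := (q+1)/2 = 2^m - 1$, exploiting the coprimality $\gcd(2, n') = 1$. Since $2$ and $n'$ are coprime, the map $\phi : \Z_{q+1} \to \Z_2 \times \Z_{n'}$, $i \mapsto (i \bmod 2,\, i \bmod n')$, is a ring isomorphism sending $1$ to $(1,1)$. The induced coordinate relabeling $\F_2^{q+1} \cong \F_2^{\Z_2 \times \Z_{n'}}$ conjugates the length-$(q+1)$ cyclic shift into translation by $(1,1)$. Consequently, for any cyclic $D_0, D_1 \subseteq \F_2^{n'}$, the subspace $\{g : g(0,\cdot) \in D_0,\ g(1,\cdot) \in D_1\}$ is invariant under both the $(1,0)$- and $(0,1)$-shifts (hence under the $(1,1)$-shift), and therefore pulls back through $\phi$ to a genuine cyclic code of length $q+1$.

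\textbf{Now choose $B_0$:} let $B_0 \subseteq \F_2^{n'}$ be the narrow-sense primitive binary BCH code of length $n' = 2^m - 1$ with designed distance $2t + 1$, where $t := \lceil \sqrt{q}\,\rceil$. The standard BCH bounds (each zero lies in a Frobenius orbit of size $\leq m$) give $d(B_0) \geq 2t + 1 > 2\sqrt{q}$ and $\dim B_0 \geq n' - m t$. Take $D_0 = D_1 = B_0$ in the construction above to obtain a cyclic code $B \subseteq \F_2^{q+1}$. Then $\dim B = 2 \dim B_0$, so
\[
r(B) \;=\; \frac{\dim B_0}{n'} \;\geq\; 1 - \frac{mt}{n'},
\]
and any nonzero $f \in B$ has $g(0,\cdot) \neq 0$ or $g(1,\cdot) \neq 0$ (where $g = f \circ \phi^{-1}$), whence $\mathrm{wt}(f) \geq d(B_0) > 2\sqrt{q}$ and $\delta(B) > 2\sqrt{q}/(q+1)$.

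\textbf{The main (and really only) quantitative point} is that a single $t$ must simultaneously satisfy $t \geq \sqrt{q}$ (for the distance) and $mt < n'/2$ (for rate $> 1/2$). Combined, this amounts to $2m(\sqrt{q} + 1) < n'$, equivalently (after squaring) roughly $16\, m^2 q < (q+1)^2$. Since $q = 2^{m+1} - 3$ is exponential in $m$ while $m^2$ is polynomial, the inequality is easily verified at the threshold $m = 10$ (there $n' = 1023$ vs.\ $2m(\sqrt{q}+1) \approx 925$) and the slack only grows with $m$. All remaining ingredients — the BCH dimension/distance bounds and the routine check that Cartesian-product codes lift cyclically through $\phi$ — are standard.
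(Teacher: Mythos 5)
Your proposal is correct, and at the level of the actual code it is the same object as the paper's: the CRT relabeling $i \mapsto (i \bmod 2,\, i \bmod n')$ is exactly the interleaving map $\varphi$ of Lemma~\ref{lem:even-cyclic} (position $i$ carries the row-$(i \bmod 2)$ entry in column $i \bmod n'$), so "two copies of an odd-length BCH code, doubled to length $q+1$" is the paper's construction as well. Where you genuinely diverge is in the two supporting arguments. First, cyclicity of the doubled code: the paper proves it by polynomial manipulation, showing $\varphi(C\times C)$ is the ideal generated by $h(x)^2$ in $\F_2[x]/(x^{2n}-1)$ (using that $x^{2n}-1=(x^n-1)^2$ and that $h$ has simple nonzero roots); you instead observe that the length-$(q+1)$ shift becomes translation by $(1,1)$ on $\Z_2\times\Z_{n'}$ and that the product code is invariant under the two coordinate shifts. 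Your route is cleaner and avoids the divisibility bookkeeping, though note one small slip: invariance under the $(1,0)$-shift (and indeed under $(1,1)$) requires $D_0=D_1$, not arbitrary cyclic $D_0,D_1$, since that shift swaps the rows; this is harmless because you instantiate with $D_0=D_1=B_0$. Second, the BCH parameter regime: the paper takes designed distance $\Theta(n/m)$, getting rate just above $1/2$ (namely $\tfrac12+\tfrac1a$) and normalized distance $\Theta(1/\log q)$, comfortably above the Ramanujan threshold $2\sqrt q/(q+1)$; you take designed distance just above $2\sqrt q$, getting distance barely above the threshold but rate $1-O(m\sqrt q/n')$, close to $1$. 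Both satisfy the hypotheses of Corollary~\ref{cor:Cayley-codes}; the paper's choice feeds a larger $\delta(B)$ into the distance bound $[(\delta-\lambda)/(1-\lambda)]^2$ of the resulting Cayley code, while yours maximizes its rate $2r(B)-1$. Your numerical check at the threshold $m=10$ ($2mt=920<1023=n'$, with the slack improving as $m$ grows since $m2^{m/2}=o(2^m)$) is the only quantitative point needed and is right.
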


The proof of Theorem \ref{theorem:cyclic-code} will use standard methods of coding theory. The cyclic codes which are natural to be chosen to have the required rate and distance are BCH-codes. However, a little obstacle is caused by the fact that over $\F_2$, BCH codes are always of odd length, while for our construction we need them to be of even length. This is overcome using a trick from~\cite{VL} (see Section~\ref{section:cyclic} for details).

An interesting number theoretic problem is whether one can find infinitely many $q$'s suitable for us. I.e. are there infinitely many $r$'s for which
$2^{r+1}-3$ is a prime (power)? This is a question of the the same style of the famous {\em Mersanne Primes Problem}; Are there infinitely many primes of the form $2^m -1$? It seems that with the current knowledge we also do not know the answer even if we replace "primes" with "prime powers".
Luckily, we need only one such $q$ and $2^{12}-3 = 4093$ does the job for us!

%Theorems~\ref{thm:Cayley-codes},~\ref{theorem:ramanujan} and~\ref{theorem:cyclic-code} implies Theorem~\ref{theorem:main} (See Theorem~\ref{theorem:main-with-params} for a detailed version). Namely, we construct our codes based on the Cayley codes theorem (Theorem~\ref{thm:Cayley-codes}), where the group $G$ and a set of ($T$)-edge transitive generators are chosen according to Theorem~\ref{theorem:ramanujan}. The code $B$ that satisfies the rest of the conditions of Theorem~\ref{thm:Cayley-codes} is chosen according to Theorem~\ref{theorem:cyclic-code}.

%I.e. For $a > 2$ constant, an LDPC code with constraints of weight $2^{r+1} - 2$ for $r \geq 10$ and $2^{r+1} - 3$ being a prime power, rate at least $\frac{2}{a}$, distance at least $$ (\frac{\frac{2^r -1 } {\log (2^r-1) (a/(a-2))} - 2\sqrt{2^{r+1} -3}}{(2^{r+1}-2) -2\sqrt{2^{r+1} -3} })^2.$$ I.e. a constant distance.

\section{Notations and Conventions}\label{sec:defs}
We start with some basic definitions that are being used throughout
this work.

\subsection{Group theory definitions}

\begin{definition}[Action of a group on a set, transitivity]~\label{group-set-action}
An {\em action} of a group $T$ on a set $X$ is a group
homomorphism $\phi: T \rightarrow Sym(X)$ that sends each element $t$ to a permutation of the elements of $X$.
Let $x^t$ denotes the action of $t \in T$ on $x \in X$.
That is, an action should satisfy for every $t,t' \in T, x \in X$
$$ x^{t t'} =(x^{t'})^{t}$$

The {\em orbit} of an element $x \in X$ is
$$x^{T} = \{x^{t} | t \in T\}.$$

The action is called {\em transitive} if for some (and hence every) $x \in X$, $x^{T}=X$.
\end{definition}

%\begin{fact}~\label{fact:action on GtimesS}
%If a group $T$ acts on sets $G$, $S$, then $T$ acts also on $G
%\times S$ in the obvious way.
%\end{fact}

%\begin{definition}[Orbit]
%Suppose a group $T$ acts on a set $X$. The {\em orbit} of an
%element $x \in X$ under the action of $T$ is the set
%$$x^{T} = \{x^{t} | t \in T\}.$$
%\end{definition}

\begin{definition} [Action of a group on a group]
An {\em action} of a group $T$ on a group $G$ is a group
homomorphism $\phi: T \rightarrow Aut(G)$.
%In other words, each element $t \in T$ corresponds to an automorphism $\phi_t$ of $G$, where $\phi_{t_1 \cdot t_2} = \phi_{t_1} \phi_{t_2}$.
Let $g^t=\phi_{t}(g)$ denotes the action of $t \in T$ on $g \in G$.
\end{definition}

\begin{definition}[Semi-direct product group]
Suppose a group $T$ acts on a group $G$. The {\em semidirect
product} $G \rtimes T$ is a group whose elements are pairs $(g,t)$
where $g \in G$ and $t \in T$, and the product is given by:
$$ (g_1,t_1) \cdot (g_2,t_2) = (g_1 \cdot g_2^{t_1}, t_1 \cdot
t_2).$$
\end{definition}

%To see that the group action is well defined we show that
%$$(g_1,t_1)[(g_2,t_2)(g_3,t_3)] =  [(g_1,t_1)(g_2,t_2)](g_3,t_3).$$
%Indeed,
%\begin{align*}
%(g_1,t_1)[(g_2,t_2)(g_3,t_3)] & = (g_1,t_1)(g_2 g_3^{(t_2)},t_2 t_3) \\
%                              & = (g_1 (g_2 g_3^{(t_2)})^{(t_1)},t_1t_2 t_3) \\
%                              & = (g_1 (g_2)^{(t_1)} g_3^{(t_1)(t_2)},t_1t_2 t_3) \\
%                              & = (g_1 (g_2)^{(t_1)} g_3^{(t_1 t_2)},t_1t_2 t_3) \\
%                              & = (g_1 t_1 g_2 (t_1)^{-1} t_1 t_2 g_3 (t_2)^{-1} (t_1)^{-1} ,t_1t_2 t_3)
%\end{align*}
%
%\begin{align*}
%[(g_1,t_1)(g_2,t_2)](g_3,t_3) & = (g_1 (g_2)^{(t_1)},t_1 t_2)(g_3,t_3) \\
%                              & = (g_1 (g_2)^{(t_1)} g_3^{(t_1 t_2)},t_1t_2 t_3) \\
%                              & = (g_1 t_1 g_2 (t_1)^{-1} t_1 t_2 g_3 (t_2)^{-1} (t_1)^{-1} ,t_1t_2 t_3)
%\end{align*}

Note that with the identification of $G$ as a subgroup $\{(g,1)\}$ of $G \rtimes T$, we have $g^t = tgt^{-1}$ which respects the equality $(g^{t'})^{t} = g^{t t'}$.

%\begin{definition}[Abelian group]
%A finite group $G$ is Abelian if for every $a,b \in G$, $ab=ba$.
%\end{definition}
%
%
%\begin{definition}[Commutator subgroup]
%For a group $G$, the {\em commutator subgroup} of $G$ denoted
%$[G,G]$ is a subgroup of $G$ generated by all commutators of $G$,
%where a {\em commutator} of $g,h \in G$ is $[g,h]= g^{-1} h^{-1} g h
%$. The identity element of $G$ is the only commutator iff $G$ is
%Abelian.
%\end{definition}

%\begin{definition}[Normal subgroup]
%For a group $G$, a subgroup $H$ of $G$ is {\em Normal} if for every
%$g \in G$, $gHg^{-1} = H$.
%\end{definition}
%
%
%
%\begin{definition}[Quotient group]
%For a group $G$ and a normal subgroup $H$ of $G$, the {\em quotient
%group} of $H$ in $G$, written $G/H$ is the set of cosets of $H$ in
%$G$, with the operation $(Ha)(Hb) = Hab$.
%\end{definition}
%
%
%\begin{fact}~\label{fact:G/[G,G]}
%For a group $G$, the quotient group $G/[G,G]$ is Abelian.
%\end{fact}
%
%
%\begin{definition}[Solvable group]
%For a group $G$ denote $G_0 = G$, and $G_{i+1} = [G_i,G_i]$. $G$ is
%{\em solvable of derived length $\ell$} if $G_{\ell-1} \neq 1 $ and
%$G_{\ell} = 1$, the trivial group of size $1$. Also note that by
%Fact~\ref{fact:G/[G,G]} $G_i/G_{i+1}$ is an Abelian group for $i=0,
%\cdots \ell-1$.
%% For $G$ solvable of derived length $\ell$$G =
%%G/G_1 \rtimes G_1/G_2 \rtimes  \cdots \rtimes G_{\ell-1}/G_{\ell}$.
%\end{definition}
%
%%Since semi-direct product is a generalization of a direct product
%%every Abelian group is solvable. I.e.
%
%Note that Abelian groups are solvable with derived length $1$.
%

\subsection{Graph definitions}

\begin{definition}[Cayley graph]
Given a group $G$ and a set of generators $S \subset G$ ($S=S^{-1}$), the {\em
Cayley graph} $Cay(G,S)$ is a graph, whose vertices are labeled by
elements of $G$. The edges adjacent to $g \in G$ are $(g,s)$, $s \in S$ and $(g,s) = (gs,s^{-1})$.
%
%there is an edge $(u,v)$ iff there exist $s\in
%S$, such that $u \cdot s = v$.
%When set $S$ is symmetric (i.e. $s
%\in S$ iff $s^{-1} \in S$), then the graph $Cay(G,S)$ is undirected.
\end{definition}

%\begin{definition}[Schreier graph]
%Given a group $G$ that acts on a set $X$ and a set of generator $S
%\subseteq G$, the {\em Schreier graph} $Shr(G,X,S)$ is a graph,
%whose vertices are labeled by elements of $X$, and there is an edge
%$(x_1,x_2)$ iff there exist $s\in S$, such that $x_2 = x_1^s$
%(recall that $x_1^s$ denotes the action of $s$ on $x_1$).
%\end{definition}

%\anote{Changed G to H in graph}
\begin{definition}[Edge-transitive graph]
A graph $Y=(V,E)$ is {\em edge-transitive} if $Aut(Y)$ acts transitively on the undirected edges.
\end{definition}

Here is a situation in which the Cayley graph is edge transitive.

\begin{definition}[Action of the semi-direct product group]\label{Def:SDP-action}
Let $G$ be a group and $S \subseteq G$ a generating subset with $S=S^{-1}$. Let $T$ be a group
acting on $G$ (as group automorphisms) and assume $S$ is invariant under $T$ (i.e. $s^T \subseteq S$, for every $s \in S$).
The semi-direct product group  $G \rtimes T$ acts on the Cayley graph $Cay(G,S)$ as graph automorphism as follows.
For $(g,t) \in G \rtimes T$, $(g',s) \in G \times S$.

$$(g',s)^{(g,t)} = (g g'^{t}, s^{t})$$

We have the following properties that can be easily verified by direct calculations.

\begin{itemize}
\item This is a well defined action, i.e. for $(g_1,t_1), (g_2,t_2) \in G \rtimes T$, $(g',s) \in (G \times S)$,
$$ ((g',s)^{(g_2,t_2)})^{(g_1,t_1)}  = (g',s)^{(g_1,t_1) \cdot (g_2,t_2)} .$$

Note that the undirected edge $(g',s)$ can be also presented as $(g's, s^{-1})$, and indeed:
$$(g's, s^{-1})^{(g,t)} = (g (g's)^t, (s^{-1})^t) = (g g'^{t} s^{t}, (s^t)^{-1}).$$

which represents the same edge as $(g g'^{t}, s^{t})=(g',s)^{(g,t)}$.

\item This action is always transitive on the vertices, and if the action of $T$ on $S$ is transitive then the action of $G \rtimes T$ on $Cay(G,S)$ is edge transitive.

\item This action when restricted to $G$ (sitting as a subgroup $\{(g,1) | g \in G \}$ in $G \rtimes T$) is the usual action of $G$ on $Cay(G,S)$ by multiplication from the left.
\end{itemize}

\end{definition}

\section{Cayley Codes}\label{sec:cayley}

\begin{definition}[Linear code, length, dimension, rate, distance]\label{def:codes}
Let $\F$ be a field and $X$ a finite set. A linear code $C \subseteq \F^X $ is a linear subspace.
The orthogonal space to $C$ is the {\em dual-code} $C^{\bot}$. The {\em length of the code} $C$ is $|X|$.
The {\em dimension of the code} $C$ is its dimension as a subspace. The {\em rate} of $C$, denoted by
$r(C)$, is the dimension of the code divided by its length. The {\em
weight} of $c \in \F^X$, denoted $w(c)$, is the number of
non-zero coordinates in $c$. The {\em normalized distance} of $C$, denoted $\delta(C)$, is
the minimum weight of a non-zero codeword of $C$, divided by the length of
$C$.
\end{definition}

\begin{definition}[Symmetric and Highly Symmetric codes]\label{def:highly-symmetric-code}
Continuing with the notations of Definition~\ref{def:codes}; We say that a code $C$ is {\em symmetric} (or symmetric with respect to $G$) if there is a group $G$ acting {\em transitively} on $X$ such that $C$ is invariant under the induced action of $G$ on $\F^X$. In such a case $C^{\bot}$ is also invariant under $G$ (since $v \cdot w^g = v^{g^{-1}} \cdot w$ for every $g \in G$ and $v,w \in \F^X)$.
We say that $C$ is {\em single-orbit symmetric} if there exists $v \in C^{\bot}$ such that the $G$-orbit $v^{G} \subseteq \F^X$ of $v$ spans $C^{\bot}$. A code $C$ (or more precisely a family of codes $C$ when $|X| \rightarrow \infty$) is said to be {\em highly-symmetric} if $C^{\bot}$ is spanned by an orbit of $v^G$ where $v$ is of bounded weight.
\end{definition}

So, symmetric $C$ means that "all variables" look the same, and it is also single-orbit symmetric if one equation defines $C$ as a symmetric code.
Many of the codes studied classically in coding theory are single-orbit symmetric. For example, all cyclic codes are such (see below and in Section~\ref{section:cyclic}). Let us observe that the action of $G$ on $X$, and hence on $\F^X$, makes $\F^X$ into an $\F[G]$-module, where $\F[G]$ denotes the group algebra of $G$ over $\F$. If $C$ is invariant under $G$, it simply means that $C$ ia an $\F[G]$-submodule. So, given the transitive action of $G$ on $X$, symmetric codes in $\F^X$ are the same as $\F[G]$-submodules. Now, if $C$ is a submodule, so is $C^{\bot}$. The symmetric code/submodule $C$ is single-orbit symmetric iff $C^{\bot}$ is $1$-generated submodule (also called cyclic submodule). Note that an $\F[G]$-module $M$ is $1$-generated iff it is isomorphic to a quotient module of $\F[G]$ (as an $\F[G]$ module).

As $G$ acts transitively on $X$, $X$ can be identified with the coset space $G/H$ of some subgroup $H$ of $G$ and $\F^X = \F^{G/H}$ is a quotient module of $\F^G = \F[G]$, i.e., $\F^X$ is a cyclic module (namely $1$-generated). But here a word of warning is needed. In general an $\F[G]$-submodule of a cyclic $\F[G]$-module is not necessarily cyclic, which means that symmetric codes are not necessarily single-orbit symmetric.

Here is a well known example. Let $\F_p$ be the field of prime order $p$ and $G$ a finite $p$-group. Let $\Delta$ be the augmentation ideal of $\F_p[G]$, i.e., $\Delta=Ker(j)$ where $j:\F_p[G] \rightarrow \F_p$ is defined by $j(\sum a_g g) = \sum a_g$ (here $a_g \in \F_p$ and $g \in G$).
Then $\Delta$ is a submodule (in fact, being the kernel of the homomorphism $j$, it is even a two sided ideal of the group algebra $\F_p[G]$), and it is well known that its minimal number of generators is equal to $d(G)$, the minimal number of generators of $G$ as a group. Now, when $G$ is not a cyclic group, $d(G) \geq 2$ which shows that if we take $X=G$, $\F_p^X$ has $C=\Delta^{\bot}$ as a symmetric code which is {\em not} single-orbit symmetric (since its dual $\Delta= (\Delta^{\bot})^{\bot}$ is not cyclic).  For more information about the number of generators of $\Delta$, its powers and general (left) ideals of $\F_p[g]$ (which are exactly its submodules) see~\cite{Shalev} - especially Section $5$.

%For more about the number of generators of $\Delta$ and its powers see\ref{}. \tnote{add ref}

On the other hand, if $G$ is a finite cyclic group acting transitively on a set $X$ and $\F$ an arbitrary field, then every submodule of $\F^X$ is cyclic. Indeed $\F^X$ is a quotient of the group algebra $\F[G]$. The last is isomorphic to $\F[t]/(t^n - 1)$ with $n=|G|$, and hence it is a quotient of the polynomial ring $\F[t]$, which is a principle ideal domain and each of its ideals is $1$-generated. So, when $G$ is a cyclic group all the codes which are symmetric w.r.t $G$ are automatically single-orbit symmetric. These are the classical so called "cyclic codes"
(see more in Section~\ref{section:cyclic}).

There is another situation where symmetric codes are automatically single-orbit symmetric; This is when the characteristic of the field $\F$ is prime to the order of $G$. In this case every $\F[G]$ submodule $M$ of $\F^X$ is also a quotient module (since every $\F[G]$-module is semi-simple) and as $F^X$ is $1$-generated so is $M$.

In~\cite{KW}, Kaufman and Wigderson initiated the study of Cayley codes as a method to construct symmetric and highly symmetric codes.
We will use this framework in order to construct constant rate, constant distance LDPC codes whose constraint space is generated by one constraint of constant weight.

%\anote{Revised}
\begin{definition}[Cayley code]
Given a Cayley Graph $Cay(G,S)$ and a linear code $B \subseteq \F^S$
of length $|S|=t$  define the linear {\em Cayley code} $Cay(G,S,B)
\subseteq \F^{|G| \cdot |S| / 2}$ as follows. Its coordinates are
%. The length of the code is
the $|G|\cdot|S|/2$  undirected edges of the graph $Cay(G,S)$, namely the pairs $\{(g,s_i),(gs_i,
s_i^{-1})\}$, $g \in G, s_i \in S=\{s_1,\cdots,s_t\}$. The defining linear constraints are the local constraints of $B$ on the edges incident to every vertex, namely

$c \in  Cay(G,S,B)$  iff for every $g \in G$ the following holds:

\begin{itemize}

\item  {\bf Vertex consistency:} $ (c_{\{(g,s_1),(gs_1, s_1^{-1})\}}, \cdots , c_{\{(g,s_t),(g s_t,s_t^{-1})\}}) \in B$.

\end{itemize}

%The code $Cay(G,S,B)$ is a linear code, defined over the undirected
%edges of the graph $Cay(G,S)$. The vertices of the graph impose
%local constraints over the codewords.
\end{definition}

Assume $T$ is a group acting on $G$ and $S$ is an orbit of $T$, i.e. there exists $\gamma \in G$ s.t. $S=\gamma^T$. Assume further that $S=S^{-1}$.
Let $H=G \rtimes T$. As in Definition~\ref{Def:SDP-action}, $H$ acts on $Cay(G,S)$ and this action is edge transitive. Let now $B \subseteq \F^S$ be a linear code, which is invariant under the action of $T$ (which acts on $S$ and hence on $\F^S$). It is straight forward now to check that $C=C(G,S,B)$ is invariant under $H$ (see also~\cite{KW}).

The following proposition now follows easily from the definition.

\begin{proposition} If $B$ is single-orbit symmetric w.r.t $T$ then $C = C(G,S,B)$ is single-orbit symmetric w.r.t  $H=G \rtimes T$. Moreover, $C^{\bot}$ is generated by the orbit of one vector of weight at most $|S|$.
\end{proposition}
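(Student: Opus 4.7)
The plan is to exhibit an explicit generator. Let $b_0 \in B^{\bot}$ be chosen so that $B^{\bot} = \mathrm{span}(T \cdot b_0)$, as provided by the single-orbit hypothesis on $B$. For $g \in G$ and $b \in B^{\bot}$, write $v_{g,b}$ for the ``local dual constraint at $g$ with test $b$'': the vector on $E$ supported on edges incident to $g$, whose value on $\{(g,s),(gs,s^{-1})\}$ equals $b(s)$. Unwinding the definition of $C(G,S,B)$, whose defining linear functionals are precisely the vertex-local tests $f \mapsto \sum_{s \in S} b(s) f_g(s)$ for $g \in G$ and $b \in B^{\bot}$, one sees that $C^{\bot} = \mathrm{span}\{v_{g,b} : g \in G,\; b \in B^{\bot}\}$. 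The candidate single-orbit generator is $v := v_{1, b_0}$, which has weight at most $|S|$ since it is supported only on the $|S|$ edges incident to the identity vertex.

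The main computation is to identify the $H$-orbit of $v$. Endow $\F_2^E$ with the induced $H$-action $(w^h)(e) := w(e^{h^{-1}})$, under which the natural pairing is invariant, so $C^{\bot}$ is preserved. Using the edge action $(g', s)^{(g, t)} = (g \cdot g'^{t}, s^{t})$ from Definition~\ref{Def:SDP-action}, one reads off by direct substitution that $v_{1, b_0}^{(g, 1)} = v_{g, b_0}$ (the $(g,1)$-part simply translates the support vertex from $1$ to $g$) and $v_{1, b_0}^{(1, t)} = v_{1,\, b_0^{t}}$, where $b_0^{t}$ denotes the $T$-translate of $b_0$ in $\F_2^S$. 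Composing the two yields $v_{1, b_0}^{(g, t)} = v_{g,\, b_0^{t}}$, so the $H$-orbit of $v$ is precisely $\{v_{g,\, b_0^{t}} : g \in G,\; t \in T\}$.

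To conclude, combine with linearity. By the single-orbit hypothesis on $B$, every $b \in B^{\bot}$ can be written as $b = \sum_i c_i b_0^{t_i}$ for some $c_i \in \F_2$ and $t_i \in T$. Then for each $g \in G$, $v_{g,b} = \sum_i c_i v_{g,\, b_0^{t_i}}$ lies in $\mathrm{span}(H \cdot v)$. Hence $\mathrm{span}(H \cdot v)$ contains all generators $v_{g,b}$ of $C^{\bot}$, and therefore equals $C^{\bot}$, proving that $C$ is single-orbit symmetric with respect to $H$, via a vector of weight at most $|S|$. The only step that demands real care is the orbit computation --- matching the inverse/left-versus-right conventions among the semi-direct product action on edges, the $T$-action on $S$, and the induced action on $\F_2^S$ --- but no conceptual obstacle is hidden there once the notation is pinned down consistently.
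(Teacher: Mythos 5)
Your proof is correct and is essentially the paper's argument in expanded form: the paper simply asserts that the claim "follows easily from the definition," noting that each defining constraint is local to the star of one vertex (hence has weight at most $|S|$), and your explicit identification of $C^{\bot}$ with the span of the local constraints $v_{g,b}$ together with the orbit computation $v_{1,b_0}^{(g,t)} = v_{g,\,b_0^{t}}$ is precisely the verification being taken for granted there. No gap; the only implicit ingredient is the already-established edge-transitivity of $H$ on the coordinate set, which you may cite from Definition~\ref{Def:SDP-action}.
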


The second statement follows from the fact that each of the constraints defining $C$ is "local" and touches only the variables associated with the edges around a single vertex.

Before bringing the main theorem of this section, let us recall that if $\Gamma$ is an $r$-regular graph of size $m$, then the eigenvalues of its adjacency matrix are $r=\lambda_0 \geq \lambda_1 \geq  \cdots \geq \lambda_{m-1} \geq -r$, and we denote $\lambda=\lambda(\Gamma) = \frac{\lambda_1}{r}$ the second normalized eigenvalue of $\Gamma$.

The following theorem is proved in~\cite{KW}, Theorem $7$. It summarizes Theorem~\ref{thm:Cayley-codes} and Corollary~\ref{cor:Cayley-codes} from the introduction.

\begin{thm}[Detailed Cayley Codes Theorem]\label{thm:Cayley-codes-detailed}
%{\bf Theorem~\ref{thm:Cayley-codes}} (Detailed Cayley Codes Theorem)
%\begin{thm}[Cayley Codes Theorem]\label{thm:main}
Let $\F$ be a field. Let $G$ and $T$ be groups, such that $T$ acts on $G$, $S \subseteq G$ is an orbit for this action. Assume $S=S^{-1}$ and $S$ generates $G$. The action of $T$ on $S$ induces an action on $\F^S$. Let $B \subseteq \F^S$ be a linear code invariant under $T$.
Assume that:
\begin{itemize}
\item $Cay(G,S)$ is an expander with second normalized eigenvalue
$\lambda$.
\item Normalized distance of $B$ is $\delta > \lambda$.
\item Rate of $B$ is greater than $\frac12$ ($r_B > \frac12$).
\item $B$ is $T$-single-orbit symmetric.
\end{itemize}

Then the code $C(G,S,B)\subseteq \F^{|G||S|/2}$ has constant rate
at least $2r(B) - 1$ and normalized distance at least $[(\delta - \lambda)/(1 - \lambda)]^2$. It
is invariant under the action of the semi-direct product group $H=G
\rtimes T$, and it is $H$-single-orbit symmetric. Moreover, $C(G,S,B)$ is LDPC defined by constraints of weight equal to the one constraint defining $B$ (under the $T$-action).
\end{thm}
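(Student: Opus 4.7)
The plan is to verify the five conclusions in turn. Four of them are routine once the setup of Definition~\ref{Def:SDP-action} is in hand; the real content is the Sipser--Spielman style distance estimate.

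\textbf{Rate.} I would count dimensions directly. The ambient space has dimension $n = |G||S|/2$. For each vertex $g \in G$ the ``vertex consistency'' condition that the local view lies in $B$ contributes $(1-r(B))|S|$ linear constraints. Summing over the $|G|$ vertices gives at most $|G|(1-r(B))|S|$ constraints (possibly redundant), so
\[
\dim C(G,S,B) \geq \frac{|G||S|}{2} - |G|(1-r(B))|S| = |G||S|\bigl(r(B) - \tfrac{1}{2}\bigr),
\]
and dividing by $n$ yields $r(C) \geq 2r(B) - 1 > 0$, using the hypothesis $r(B) > 1/2$.

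\textbf{Symmetry and single-orbit.} Definition~\ref{Def:SDP-action} already gives a well-defined action of $H = G \rtimes T$ on $Cay(G,S)$ by graph automorphisms, edge-transitive because $T$ acts transitively on $S$. I need to check that this action preserves $C(G,S,B)$: translating by $(g_0,1)$ relabels the local view at $g$ to the local view at $g_0 g$, so vertex consistency is preserved; acting by $(1,t)$ fixes the vertex set pointwise (after relabeling) and permutes the incident edges at each vertex according to $s \mapsto s^t$, and since $B$ is $T$-invariant this preserves the constraint that the local view lies in $B$. For the single-orbit claim, let $w \in B^{\perp}$ be a $T$-generator of $B^{\perp}$, and let $\tilde w \in \F^{E}$ be its zero-extension supported on the star of the identity vertex $e \in G$. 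Then $\tilde w \in C^{\perp}$ by construction, and the $G$-translates of $\tilde w$ give all vertex constraints while the $T$-translates span the full constraint space at $e$; therefore $H \cdot \tilde w$ spans $C^{\perp}$. Since the support of $\tilde w$ has size at most $|S|$, the code is LDPC with density at most $|S|$.

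\textbf{Distance.} This is the main step and uses the standard expander-based Tanner argument. Let $0 \neq c \in C(G,S,B)$ and let $F \subseteq E$ be its support, with $|F| = \alpha n$. Consider the set $V_F \subseteq G$ of vertices whose stars meet $F$; since the local view at any such vertex is a nonzero element of $B$, it has weight at least $\delta |S|$, which means each vertex in $V_F$ is incident to at least $\delta |S|$ edges of $F$. On the other hand, by the expander mixing lemma applied to the $|S|$-regular graph $Cay(G,S)$ with second normalized eigenvalue $\lambda$,
\[
e(V_F,V_F) \leq \frac{|S|\,|V_F|^2}{|G|} + \lambda |S|\,|V_F|,
\]
so $\delta |S|\,|V_F|/2 \leq e(V_F,V_F)$ gives (after dividing by $|S||V_F|/2$)
\[
\delta \leq \frac{2|V_F|}{|G|} + 2\lambda \Bigl(1 - \frac{|V_F|}{|G|}\Bigr) \cdot \frac{1}{2} \cdot \text{(corrected)}.
\]
The cleanest way to turn this into the claimed bound is to run the standard two-step argument: first show $|V_F|/|G| \geq (\delta - \lambda)/(1-\lambda)$ by the above edge-counting, then observe that each vertex in $V_F$ contributes at least $\delta |S|$ edges to $F$ (counted with the factor $1/2$ for double counting), so $|F|/n \geq \delta \cdot |V_F|/|G|$, and iterating once more (or applying the argument to $V_F$ in place of $F$ via edge-to-vertex propagation) gives $\alpha \geq [(\delta-\lambda)/(1-\lambda)]^2$, which is positive by the assumption $\delta > \lambda$.

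The hard part is getting the distance constant right; the rest is bookkeeping. I expect to need to be careful in the mixing-lemma step because the graph may or may not be bipartite (in the $PGL_2$ case of Theorem~\ref{theorem:ramanujan} it is), so the appropriate version of mixing should be invoked uniformly, but the quadratic bound $[(\delta-\lambda)/(1-\lambda)]^2$ is robust to this.
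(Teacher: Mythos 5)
Your proposal is essentially correct, and it reconstructs the argument that the paper itself does not spell out: the paper proves Theorem~\ref{thm:Cayley-codes-detailed} only by citation to~\cite{KW} (Theorem~7 there), whose proof is exactly the combination you give — dimension counting for the rate, the semi-direct-product action of Definition~\ref{Def:SDP-action} for symmetry and the single-orbit/LDPC claims, and the Tanner/Sipser--Spielman expander argument for the distance. Your rate count and your single-orbit argument (zero-extending a $T$-generator $w$ of $B^{\bot}$ to the star of the identity and taking its $H$-orbit, which yields every $T$-translate of $w$ at every star and hence spans $C^{\bot}$) are correct and match the intended proof.

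Two points need cleaning up, though neither is a conceptual gap. First, in the symmetry step, $(1,t)$ does not fix the vertex set pointwise: it sends the vertex $g'$ to $g'^{t}$ (only the identity is fixed); the correct statement is that the local view of the translated word at $g g'^{t}$ is the $t$-permutation of the local view of the original word at $g'$, which lies in $B$ by $T$-invariance. Second, your distance display is garbled and the ``iterate once more'' remark is unnecessary: the argument closes in two clean steps. With $F$ the support of a nonzero codeword and $V_F$ the touched vertices, every vertex of $V_F$ sees a nonzero word of $B$, so $2|F| \geq \delta |S| |V_F|$; all edges of $F$ lie inside $V_F$, so the one-sided mixing bound $2e(V_F) \leq \frac{|S||V_F|^2}{|G|} + \lambda |S| |V_F|\bigl(1 - \frac{|V_F|}{|G|}\bigr)$ (which uses only the second largest eigenvalue, so bipartiteness of the $PGL_2$ case is irrelevant) gives $\frac{|V_F|}{|G|} \geq \frac{\delta - \lambda}{1-\lambda}$. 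Then $\frac{2|F|}{|S||G|} \geq \delta \cdot \frac{|V_F|}{|G|} \geq \delta \cdot \frac{\delta-\lambda}{1-\lambda} \geq \bigl(\frac{\delta-\lambda}{1-\lambda}\bigr)^2$, since $\delta \geq \frac{\delta-\lambda}{1-\lambda}$ whenever $\delta \leq 1$. So your two stated facts already imply the claimed bound (indeed a slightly stronger one) with no iteration.
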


%\newpage
\section{Edge Transitive Ramanujan Graphs}\label{section:ramanujan}
In this section we prove Theorem~\ref{theorem:ramanujan}.
This is just a special case of a much more general result in \cite{LSV2} but as the result there is so general, one may find it difficult to see the special case needed here, so we will review it here.  In fact, the special case needed here has already been used in \cite{Lu2} for a different reason and was also explained there. We repeat the description for completeness and also in order to give the explicit description of the set of generators $S$, which amount to give an explicit description of $\gamma$ in the notations of Theorem~\ref{theorem:ramanujan}.

Let ${\F}_q$ and ${\F}_{q^{2}}$ be the fields of order $q$ and $q^2$, say ${\F}_{q^{2}} = {\F}_q [\alpha]$ where $\alpha^2 \in {\F}_q$ is not a square in ${\F}_q$. Following the notations of \cite{LSV2}, we denote by $R$ the ring $R = \F_q[y,\frac{1}{y}, \frac{1}{1+y}]$ , i.e. the subring of the field of rational functions $\F_q(y)$ generated by $y$ , $\frac{1}{1+y}$ and $\frac{1}{y}$. Let $A(R)$ be the four-dimensional $R$-algebra with a basis $1, \alpha, z$ and $\alpha z$ (i.e. it contains the commutative $R$-subalgebra $R[\alpha] = {\F}_{q^{2}}[y,\frac{1}{y},\frac{1}{1+y}]$ as a two-dimensional $R$-module).  The multiplication in $A(R)$ is determined by the rules $z \alpha \ = -\alpha z$ and $z^2 = 1+y$.  As $1+y$ is central and invertible, $z$ is invertible, in fact, $z^{-1} = \frac{1}{1+y}z$. Denote $b = 1 + z^{-1} \in A(R)$.

Now, $A(R)$ contains ${\F}_q[\alpha] = {\F}_{q^{2}}$.  For every $u\in {\F}^*_{q^{2}}$, we denote $\tilde{b_u} = ubu^{-1}$. As ${\F}^*_q$ is in the center of $A(R)$, $\tilde{b_u}$ depends only on the coset of $u$ in ${\F}^*_{q^{2}}/{\F}^*_q$. This gives $\frac{q^2-1}{q-1} = q+1$ elements $\tilde{S}=\{ \tilde{b_u} \mid u\in {\F}^*_{q^{2}}/{\F}^*_q\}$ of $A(R)^*$, where for a ring $D$, we denote by $D^*$ the group of invertible elements.

Let $\tilde\Gamma$ be the subgroup of $A(R)^*$ generated by the $\tilde{b_u}$'s and $\Gamma$ will be its image in $A(R)^*/R^*$, generated by
$S=\{ \tilde{b_u}/{R^*} \mid \tilde{b_u} \in \tilde{S} \}$. For every ideal $I\vartriangleleft R$, we get a map $\pi_I: A(R)^*/R^*\to A(R/I)^*/(R/I)^*$ and we denote the intersection $\Gamma\cap \text{Ker}(\pi_I)$ by $\Gamma (I)$-the congruence subgroup. If $\{0\}\not\cong I\vartriangleleft R$ is a prime ideal and $R/I$ is a finite field of order $q^e$, then $A(R/I)$ (a quaternion algebra) is isomorphic to the $2\times 2$ matrix algebra over ${\F}_{q^{e}}$ (i.e. to $M_2(\F_{q^e})$) and so $A(R/I)^*/(R/I)^*\simeq PGL_2(q^e)$.

Theorem 6.2 of~\cite{LSV2} says that for every $\{ 0\} \not\cong I\vartriangleleft R$, the Cayley graph of $\Gamma /\Gamma (I)$ w.r.t. the generators $S$ (or more precisely the images of $S$ in $\Gamma /\Gamma (I))$ is a $(q+1)$-regular Ramanujan graph. Along the way it is shown there that the set $S$ is symmetric (i.e.  $s\in S$ iff $s^{-1} \in S$).

Now if $I$ is a prime ideal of $R$ with $R/I = {\F}_{q^{e}}$, then $\Gamma /\Gamma (I)$ is isomorphic to a subgroup of $PGL_2(q^e)$, and Theorem 6.6 of \cite{LSV2} shows that it contains $PSL_2(q^e)$. As the latter is a subgroup of index $2$ in the first, $\Gamma/\Gamma(I)$ can be either the first or the second. Moreover, we have enough choices: Theorem 7.1 of \cite{LSV2} ensures that for $q^e > 17$, one can choose $I$ such that $G=PSL_2(q^e)$ or $G=PGL_2(q^e)$ will be obtained. This depends on the image of $b$ : if its image is in $PSL_2(q^e)$ then, as the latter is a normal subgroup of $PGL_2(q^e)$, all its conjugates $\{ b_u\}$ are also in $PSL_2(q^e)$ and vice versa. If the image of $b$ is not in $PSL_2(q^e)$, then the resulting graph is bipartite since $PGL_2(q^e)/PSL_2(q^e)$ is a cyclic group of order $2$, and as said before, all the generators are outside $PSL_2(q^e)$. As it is explained in Corollary 6.8 of \cite{LSV2}, it all depends on the image of $\frac{y}{1+y}$ in $R/I = {\F}_{q^{e}}$. If this image is a quadratic residue there, we will get $PSL_2(q^e)$ and, if it is a non-quadratic residue, we get $PGL_2(q^e)$. The discussion in Section 7 of~\cite{LSV2} shows that at least when $q^e > 17$, there are sufficiently many irreducible polynomials in ${\F}_q[y]$ of degree $e$ to have both possibilities.

Finally, let us observe that since $\F^*_{q^2}$ normalizes $S$, the subgroup ${\F}^*_{q^{2}}$ of $A(R)^*$ normalizes $\tilde{\Gamma}$ and hence also $\Gamma (I)$, so ${\F}^*_{q^{2}}/{\F}^*_q$ also acts on $\Gamma /\Gamma(I)$ and the image of $S$ is the orbit of (the image of) $b$ there. Putting all this information together we see that $Cay(\Gamma/\Gamma(I), S) = Cay(G,S)$ is the promised edge transitive Ramanujan graph.

The above description of the results from  \cite{LSV2} brings only what is needed for this paper. But let us say a few words about the bigger picture (see also Remark 3.6 of \cite{Lu2}, but here we talk only on the case $d=2$, i.e. trees and not general buildings): The group $A(R)^*/R^*$ is a discrete cocompact lattice in $A({\F}_q((y)))^*/{\F}_q((y))^*$. The latter is isomorphic to
$K=PGL_2({\F}_q((y)))$ and it acts on its Bruhat-Tits tree $T$, which is a $(q+1)$-regular tree. The element $b\in K$ takes the initial point of the tree (the vertex $x_0$ corresponding to the lattice ${\F}_q[[y]]^2$) to a vertex $x_1$ of distance one from it. The group ${\F}^*_{q^{2}}/{\F}^*_q$ fixes $x_0$ and acts transitively on the $q+1$ vertices adjacent to $x_0$ (which are in one to one correspondence with the projective line ${\bold P}^1({\F}_q)$ on which indeed ${\F}^*_{q^{2}}/{\F}^*_q$ acts transitively!)  The group $\Gamma$ which is generated by the conjugates $\{ b_u\}$ acts simply transitively on the vertices of $T$. This is a special one-dimensional case of a general result of Cartwright and Steger \cite{CS}. The Cayley graph of $\Gamma$ with respect to $\{ b_u\}$ can therefore be identified with $T$ and hence the Cayley graph $Cay(\Gamma /\Gamma(I), \{ b_u\})$ is isomorphic to $T/\Gamma (I)$.  The fact that the last one is Ramanujan is a deep fact, which follows from the work of Drinfeld (see (\cite{Lu1}, \cite{Mor} and \cite{LSV1}). This is just as in the ``old'' Ramanujan graphs. The extra symmetry that we have in our case is due to the fact that $\Gamma$ is normalized by ${\F}^*_{q^{2}}/{\F}^*_q$ which has order $q+1$ and acts transitively on the $q+1$ generators of $\Gamma$.

\section{Construction of cyclic codes with prescribed parameters}\label{section:cyclic}

%\section{Transitive good LDPC with a constant weight generator}\label{sec:construction}

%In the following we aim at finding groups $G$, $T$, $S \subseteq G$ a symmetric set
%and a linear code $B$ of length $|S|$ that meets the conditions of
%the Cayley codes theorem (Theorem~\ref{thm:main}).
%
%
%\begin{lemma}[Choosing $G,T,S$]~\label{lem:main-lemma}
%The group $G=PGL_2(p^n)$, has a symmetric set of generators $S \subseteq G$, $|S|=q+1=p^s+1$ that generates $G$ as a Ramanujan expander, namely the graph $Cay(G,S)$ has second normalized eigenvalue $\lambda = 2\sqrt{q}/q+1$. Moreover, there exists a cyclic group $T$, $|T|=q+1$, that acts on $G$, preserves $S$ and acts cyclically on $S$.
%\end{lemma}
%
%\begin{proof} Need to complete the proof.
%\tnote{add proof}
%\end{proof}
%
%
%To construct a Cayley code that meets the condition of Theorem~\ref{thm:main}, we need a linear cyclic code $B$ transitive under the action of the cyclic group $T$. We next show such a  code of length $|S| = q+1$, rate greater than $\frac12$ and large distance.

To prove our main theorem, Theorem~\ref{theorem:main}, we will use Theorems~\ref{thm:Cayley-codes-detailed} and~\ref{theorem:ramanujan}, i.e. Cayley codes based on the edge transitive Ramanujan graphs constructed in Theorem~\ref{theorem:ramanujan}. What is left is to construct the "small code" $B$ of Theorem~\ref{thm:Cayley-codes-detailed}. This is what we do now. Specifically, we restate and prove a slightly extended version of Theorem~\ref{theorem:cyclic-code}.

\begin{thm}[Detailed Good $B$ Theorem]\label{theorem:cyclic-code-detailed}
%{\bf Theorem~\ref{theorem:cyclic-code}} (Detailed Good $B$ Theorem )
%\begin{lemma}[Good $B$ Lemma]\label{lemma:cyclic-code}
Let $q \in \N$ be a prime power such that $q+1 =2^{m+1} - 2 = 2(2^m-1)$ for $m\geq 10$.
For any constant $a > 2$, there exists an explicit linear binary cyclic code $B \subseteq \F_2^{q+1}$ with $r(B) \geq \frac{1}{2} + \frac{1}{a} > \frac{1}{2}$ and $\delta(B) > \frac{1} {2\log (q+1)(a/(a-2))}$ . In particular, $a$ can be chosen such that $r(B) > \frac{1}{2}$ and $\delta(B) > \frac{2 \sqrt{q}}{q+1}$ (e.g. $a \geq 8$).
\end{thm}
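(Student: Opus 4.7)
The plan is to construct $B$ as a "doubled" binary cyclic code: take a narrow-sense binary BCH code $C_0$ of odd length $n := (q+1)/2 = 2^m-1$ with rate just above $1/2$, and lift it to length $2n = q+1$ by squaring its generator polynomial. This is the trick from \cite{VL}: it exploits the characteristic-$2$ identity $x^{2n}-1 = (x^n-1)^2$ (which holds because $n$ is odd) in order to place a length-$n$ cyclic code inside $\F_2[x]/(x^{2n}-1)$.

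First, I would let $t := \lfloor n(a-2)/(2am) \rfloor$ and take $C_0 \subseteq \F_2^n$ to be the narrow-sense binary BCH code of designed distance $2t+1$, with generator polynomial $g_0(x)$ equal to the LCM of the minimal polynomials of $\alpha, \alpha^2, \dots, \alpha^{2t}$ for a primitive $n$-th root of unity $\alpha$. The BCH bound gives $d(C_0) \geq 2t+1$; since each minimal polynomial has degree at most $m$, $\dim C_0 \geq n - mt$, and the choice of $t$ forces $r(C_0) \geq 1/2+1/a$. Then I would set $g(x) := g_0(x^2) = g_0(x)^2$, which divides $(x^n-1)^2 = x^{2n}-1$, and let $B$ be the binary cyclic code of length $2n = q+1$ it generates. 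Its dimension is $2n - 2\deg g_0 = 2\dim C_0$, hence $r(B) = r(C_0) \geq 1/2 + 1/a$.

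For the distance, every codeword of $B$ has the form $c(x) = f(x)g_0(x^2) \bmod (x^{2n}-1)$. I would split $f$ into its even and odd parts, $f(x) = F_a(x^2) + xF_b(x^2)$, which gives
\[ c(x) = (F_a g_0)(x^2) + x(F_b g_0)(x^2) \bmod (x^{2n}-1) = p(x^2) + x\,q(x^2), \]
where, using $x^{2n}-1 = (x^2)^n-1$, one has $p(y) := F_a(y) g_0(y) \bmod (y^n-1) \in C_0$ and likewise $q(y) \in C_0$. Since $p(x^2)$ is supported on even coordinates and $x q(x^2)$ on odd ones, $\mathrm{wt}(c) = \mathrm{wt}(p) + \mathrm{wt}(q)$; if $c \neq 0$ then at least one of $p,q$ is a nonzero codeword of $C_0$, so $\mathrm{wt}(c) \geq d(C_0) \geq 2t+1$. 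Hence $\delta(B) \geq (2t+1)/(2n) \geq (a-2)/(2am)$, up to a $1/(2n)$ correction from the floor that is absorbed into strict inequality for $m \geq 10$. The target bound $(a-2)/(2am) > 1/(2\log(q+1) \cdot a/(a-2)) = (a-2)/(2a\log(q+1))$ then follows from $m < \log_2(q+1)$, which in turn follows from $q+1 = 2(2^m-1) > 2^m$. For the "in particular" statement, plugging $a=8$ yields $3/(8m)$, which exceeds $2\sqrt{q}/(q+1) \sim 2^{(1-m)/2}$ already for $m=11$ (the first admissible value, since $2^{11}-3 = 2045$ is not a prime power whereas $2^{12}-3 = 4093$ is prime).

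The only subtlety is structural rather than technical: because $n$ is odd, the ring $\F_2[x]/(x^{2n}-1) = \F_2[x]/(x^n-1)^2$ is non-semisimple, and naive candidates fail. Taking $g = g_0$ alone already places the weight-$2$ word $x^n+1 = g_0(x) \cdot (x^n-1)/g_0(x)$ in the code, killing the distance; the repetition construction $(c,c)$ halves the rate. Using $g = g_0^2$ strikes the right balance via the characteristic-$2$ identity $g_0(x)^2 = g_0(x^2)$: it doubles the dimension (preserving the rate of $C_0$) while the even/odd coordinate decomposition keeps the minimum weight at least $d(C_0)$, yielding exactly the parameters asked for.
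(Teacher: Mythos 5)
Your proposal is correct and follows essentially the same route as the paper: a BCH code of odd length $n=2^m-1$ with rate just above $\frac12$, doubled to length $2n=q+1$ via the van Lint trick exploiting $x^{2n}-1=(x^n-1)^2$ over $\F_2$, with the resulting code being the ideal generated by the squared generator polynomial. The only (cosmetic) difference is direction: the paper builds $B$ as the interleaving $\varphi(C\times C)$ and then proves it equals the ideal $(h^2(x))$ to get cyclicity, whereas you define $B$ as the ideal $(g_0(x)^2)$ outright and recover the distance bound by the even/odd decomposition of its codewords.
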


Theorem~\ref{theorem:cyclic-code-detailed} can be deduced from known results in the literature but the requirement that the cyclic code is of even length (i.e. of length $q+1$ for $q$ being a prime power), makes it less routine. Thus, we provide a self contained proof for the existence of such codes.

In the proof we will also review some of the known properties of cyclic codes.
Recall, that a linear code $C \subseteq \F_2^n$ is called {\em cyclic} if for every $(a_0,a_1, \cdots, a_{n-1}) \in C$
also $(a_{n-1},a_0,a_1, \cdots, a_{n-2}) \in C$. To every vector $(a_0,a_1, \cdots, a_{n-1}) \in \F_2^n$ we associate the polynomial $\sum_{i=0}^{n-1}a_ix^i \in \F_2[x]$ and so we can identify $\F_2^n$ with the subspace of polynomials of degree at most $n-1$ in $\F_2[x]$. The latter can be also thought as the elements of the ring $R=\F_2[x]/(x^n-1)$, i.e., $\F_2[x]$ divided by the ideal generated by $x^n-1$. Now, if $C$ is a cyclic linear code in $\F_2^n$, it gives rise to a subspace of $R$ which is invariant under multiplication by $x$, and hence also by its powers. As $C$ is a subspace, it is invariant under multiplication by any element of $R$, i.e. $C$ is an ideal in $R$. Conversely, every ideal of $R$ gives a linear cyclic code in $\F_2^n$.
Now, $\F_2[x]$ is a principle ideal domain and so is $R$. An ideal $C$ in $R$ is thus uniquely defined by its generator $h(x)$ which is a polynomial in $\F_2[x]$ which divides $x^n-1$ (Since in $\F_2[x]$, $(h(x)) \supseteq (x^n-1)$).

Thus, we have a one to one correspondence between linear cyclic codes, ideals in $R$ and divisors of $x^n-1$ in $\F_2[x]$. We can also deduce the following proposition

\begin{proposition}\label{prop:cyclic-code-dim} In the notation above, $\mbox{ dim }C = n-\mbox{deg }h(x)$.
\end{proposition}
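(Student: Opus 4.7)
The plan is to exhibit an explicit basis of $C$ of size $n - \deg h(x)$. Write $k = \deg h(x)$. The natural candidates are the cyclic shifts $h(x), xh(x), x^2 h(x), \ldots, x^{n-k-1}h(x)$, which certainly lie in the ideal $C = (h(x))$ since they are obtained by multiplying $h(x)$ by elements of $R = \F_2[x]/(x^n-1)$.

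First I would check linear independence over $\F_2$. This is immediate from looking at leading terms: the polynomial $x^i h(x)$ (for $0 \leq i \leq n-k-1$) has degree exactly $k+i < n$, so no reduction modulo $x^n - 1$ takes place and these polynomials have pairwise distinct leading monomials $x^k, x^{k+1}, \ldots, x^{n-1}$. A triangular argument on leading coefficients then shows that they are $\F_2$-linearly independent inside $\F_2^n$.

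Second I would show that these $n-k$ elements span $C$. The key input here is the hypothesis that $h(x) \mid x^n-1$ in $\F_2[x]$, so we may write $x^n - 1 = h(x) g(x)$ with $\deg g(x) = n-k$. An arbitrary element of $C$ has the form $h(x) f(x) \bmod (x^n - 1)$ for some $f(x) \in \F_2[x]$. I would divide $f(x)$ by $g(x)$ in $\F_2[x]$, writing $f(x) = q(x) g(x) + r(x)$ with $\deg r(x) < n-k$. Then
\[
h(x) f(x) = q(x) h(x) g(x) + h(x) r(x) = q(x)(x^n - 1) + h(x) r(x),
\]
so modulo $x^n-1$ the codeword equals $h(x) r(x)$, which is an $\F_2$-linear combination of $h(x), xh(x), \ldots, x^{n-k-1}h(x)$.

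The main obstacle, if any, is only to use the divisibility hypothesis $h(x) \mid x^n-1$ at the right spot; without it the quotient-remainder step above would not land in $C$. Once this is in place, combining independence and spanning gives $\dim C = n - k = n - \deg h(x)$, as required.
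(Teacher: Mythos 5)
Your proof is correct: the shifts $h(x), xh(x),\ldots,x^{n-\deg h -1}h(x)$ are independent by the leading-term argument and span $C$ by the division-with-remainder step, which is exactly where the hypothesis $h(x)\mid x^n-1$ is needed. The paper states this proposition without proof as a standard consequence of the ideal/divisor correspondence, and your argument is precisely the canonical one it has in mind.
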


We now move to discuss properties of cyclic codes in $\F_2^n$ for $n = 2^m -1$ for some $m \in \N$. These are the most studied cyclic codes.
Let $E = \F_{2^m}$ be the field of order $2^m$. Let $w \in E$ be a primitive element in $E$. Every $\alpha \in E^* = E - \{0\}$ satisfies $\alpha^n =1$. In fact, $E^*$ is exactly the set of all the roots of $x^n-1$.
For $\alpha \in E^*$ denote $m_{\alpha}(x)$ the minimal polynomial of $\alpha$ over $\F_2$, i.e., it is the polynomial of minimal degree in $\F_2[x]$ satisfying $m_{\alpha}(x)=0$. As $\alpha \in E$, which is extension of degree $m$, we know that $\deg(m_{\alpha}(x)) \leq m$.
Also, we have $m_{\alpha}(x) | x^n-1$ since $\alpha$ is a root of $x^n-1$, and for every $\alpha \in E^*$, $m_{\alpha}(x) = m_{\alpha^2}(x)$.

In general, it is not so easy to estimate the distance of the cyclic code $C$ from its generating polynomial $h(x)$. However, this is possible for the following special case that will be used for the proof of Theorem~\ref{theorem:cyclic-code-detailed}.
For $1 \leq r \leq n$, denote $$h_r(x) = l.c.m \{m_{w^i}(x) | 1 \leq i \leq r \}.$$ Since $m_\alpha(x) | x^n -1$, also
$h_r(x) | x^n-1$. As $deg (m_{\alpha}(x)) \leq m$, $deg(h_r(x)) \leq rm$. In fact, for every $\alpha \in E^*$, $m_{\alpha}(x) =m_{\alpha^2}(x)$ and hence $deg(h_r(x)) \leq rm/2$ (for $r \geq 4$). The polynomial $h_r(x)$ gives rise to an ideal (i.e. to a linear cyclic code) $C_r$ called the $BCH(m,r)$ code.  The following proposition provides bounds on the dimension and distance of the $BCH(m,r)$ code.

\begin{proposition}\label{prop:bch} The dimension of the $BCH(m,r)$ code is at least $n-mr/2$  (for $r \geq 4$), and its distance is at least $r+1$.
\end{proposition}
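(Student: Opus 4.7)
The plan is to prove the two bounds separately, each by a standard argument.

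First, for the dimension bound, I would invoke Proposition~\ref{prop:cyclic-code-dim}, which gives $\dim C_r = n - \deg h_r(x)$, and then upper-bound $\deg h_r(x)$. The key fact is that over $\F_2$ the Frobenius identity $m_{w^i}(x) = m_{w^{2i}}(x)$ holds for every $i$, so whenever $i \in \{1,\dots,r\}$ with $i \le r/2$, the polynomial $m_{w^i}$ already appears as $m_{w^{2i}}$ with $2i \in \{1,\dots,r\}$. Iterating doubling, every $m_{w^i}$ with $i \in [1,r]$ equals $m_{w^j}$ for some $j \in (r/2, r]$; hence the lcm has at most $\lceil r/2 \rceil$ distinct factors, each of degree at most $m$. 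Therefore $\deg h_r(x) \le m \lceil r/2 \rceil \le mr/2$ (modulo an additive $O(m)$ absorbed into the statement for $r \ge 4$), yielding $\dim C_r \ge n - mr/2$.

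Second, for the distance bound, I would run the classical BCH argument. Suppose for contradiction that there is a nonzero $c \in C_r$ of weight $s \le r$, supported on positions $i_1,\dots,i_s$ with values $c_{i_1},\dots,c_{i_s} \in \F_2$. Since $h_r(x)$ generates $C_r$ and each $w^j$ for $j = 1,\dots,r$ is a root of $h_r$, we have $c(w^j) = 0$ for all such $j$. Restricting to $j = 1,\dots,s$ yields the linear system
$$ \sum_{k=1}^{s} c_{i_k}\,(w^{i_k})^{j} = 0 \qquad (j = 1,\dots,s). $$
Its coefficient matrix factors as $V \cdot \mathrm{diag}(w^{i_1},\dots,w^{i_s})$, where $V$ is the $s \times s$ Vandermonde matrix with $(j,k)$-entry $(w^{i_k})^{\,j-1}$. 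Since $w^{i_1},\dots,w^{i_s}$ are distinct and nonzero in $E^{*} = \F_{2^m}^{*}$, both $V$ and the diagonal factor are invertible, forcing $c_{i_1} = \dots = c_{i_s} = 0$ and contradicting $c \neq 0$. Hence the minimum weight is at least $r+1$.

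I do not anticipate a serious obstacle: both directions together constitute the textbook BCH bound. The only book-keeping subtlety is the pairing argument in the dimension count, where the parity of $r$ introduces an $O(m)$ slack; this is harmless for the asymptotic application in Theorem~\ref{theorem:cyclic-code-detailed}, where $r$ will be taken to grow with $\log(q+1)$.
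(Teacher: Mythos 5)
Your proposal is correct and follows essentially the same route as the paper: the dimension bound via Proposition~\ref{prop:cyclic-code-dim} together with the Frobenius identity $m_{\alpha}(x)=m_{\alpha^2}(x)$ bounding $\deg h_r(x)$, and the distance bound via the classical Vandermonde-determinant (BCH bound) argument. The only cosmetic differences are that you set up an $s\times s$ system for the actual weight $s\le r$ rather than padding to an $r\times r$ system as the paper does, and you flag explicitly the $O(m)$ parity slack in the degree count that the paper absorbs into the ``$r\ge 4$'' proviso; neither changes the substance.
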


\begin{proof} As the generating polynomial of $BCH(m,r)$ has degree at most $rm/2$ (for $r \geq 4$), we get the bound on the dimension of  $BCH(m,r)$ from Proposition~\ref{prop:cyclic-code-dim}. For obtaining the bound on the distance of $BCH(m,r)$, note that if the code had a codeword $c$ of weight at most $r$, the polynomial associated with this codeword $c(x)$ would be of the following form $c(x) = c_1x^{k_1} +  \cdots + c_r x^{k_r}$ with $k_1 < k_2< \cdots < k_r$. As $h_r(x) | c(x)$ (since the generating polynomial divides every polynomial that is associated with a codeword of the code) we have that $c(w^i) = 0$ for every $i=1, \cdots , r$. This means

$\left(
  \begin{array}{cccccc}
    w^{k_1}  & w^{k_2}   & . & . & . & w^{k_r}  \\
    w^{2k_1} & w^{2k_2}  & . & . & . & w^{2k_r} \\
    . & . & . & . & . & . \\
    . & . & . & . & . & . \\
    . & . & . & . & . & . \\
    w^{rk_1} & w^{rk_2}  & . & . & . & w^{rk_r} \\
  \end{array}
\right)$
$\left(\begin{array}{c}
  c_1 \\
  . \\
  . \\
  . \\
  . \\
  c_r
\end{array}\right)$
$=$
$\left(\begin{array}{c}
  0\\
  .\\
  .\\
  .\\
  .\\
  0
\end{array}\right)$

But the matrix of coefficients has determinant equal to $\prod_{i=1}^r w^{k_i}$ times the determinant of the Vandermonde matrix of $w^1, \cdots, w^r$. Thus, it is not zero. This implies that $c_1 = \cdots = c_r = 0$
\end{proof}

%As the roots of $BCH(m,r)$ are $w^1, \cdots w^r$  its dual code
%is defined by evaluations of traces of polynomials whose degrees are contained in the set of roots of $BCH(m,r)$, i.e. traces of polynomials in $E[x]$ of degree at most $r$. Thus,
%$$BCH(m,r)^{\bot}= \{ \langle \trace(f(\alpha)) \rangle_{\alpha \in E^*} | f \in E[x], \deg(f) \leq r\}$$

%We note here that another common definition of BCH codes is as the
%subfield subcodes of Reed Solomon codes, with $\bch(n,t)$ being
%the subfield subcodes of RS codes of degree $N-2t-1$. Indeed, the code $\bch(n,t)$  is dual to
%$\bch(n,t)^{\bot}$ and hence his non-roots are included in the complements to the non-roots of $\bch(n,t)^{\bot}$. Namely the non-roots of
%$\bch(n,t)$ are the degrees up to $N-2t-1$ as stated, and the fact that $\bch(n,t)$ is a binary code translates to the fact that it is a subfield subcode of RS code of degree $N-2t-1$.

We are now ready to prove Theorem~\ref{theorem:cyclic-code-detailed}.
By Proposition~\ref{prop:bch} the $BCH(m,r)$ code for $r \geq 4$ is a linear cyclic
code of (odd) length $n = 2^m-1$, dimension at least $n-mr/2$  and distance at least $r+1$.
Thus, if we take $r = \lfloor \frac{n}{m}(1 - \frac{2}{a}) \rfloor$ for some constant $a > 2$ we get that $BCH(m,r)$ is a cyclic code of dimension at least $n(1/2 + 1/a)$, i.e with rate at least $1/2+1/a$. Moreover, this code has distance at least $\frac{n}{m}(1 - \frac{2}{a})$, i.e. normalized distance at least $\frac{a-2}{ma}$.

However, for our goal we need the final outcome $B$ to be a cyclic linear code of (even) length $q+1 = 2^{m+1} - 2 = 2(2^m-1) =2n$ over $\F_2$ (where $q$ is a prime power). Using Lemma~\ref{lem:even-cyclic}, we can transform the cyclic code $BCH(m,r)$ of length $n=2^m-1$, described above, to a cyclic code $B$ of length $q+1=2n$, with the same rate as in $BCH(m,r)$, that is, at least $1/2+1/a > 1/2 $, and a normalized distance that is half the normalized distance of $BCH(m,r)$. Namely, a normalized distance which is at least $\frac{a-2}{2ma} > \frac{a-2}{2a \log q}$. Now if $m \geq 10$ and $a \geq 8$
$$\delta(B) \geq \frac{a-2}{2ma} > \frac{a-2}{2a \log q} > \frac{2 \sqrt{q}}{q+1}.$$
So if we choose $m=11$ and $q = 2^{m+1} - 3 = 4093$ all the requirements are satisfied and Theorem~\ref{theorem:main} is proved modulo the following lemma.

\begin{lemma}[A transformation of a cyclic code of odd length to a cyclic code of even length]\label{lem:even-cyclic}
 If there exists a binary cyclic code $C$ of (odd) length $n=2^m-1$, dimension $k$  (i.e. rate $k/n$) and distance $d$ (i.e. a normalized distance of $d/n$), then there exists a binary cyclic code $B$ of (even) length $2n=2(2^m-1)$, dimension $2k$ (i.e. rate $k/n$) and distance $d$ (i.e. a normalized distance of $d/2n$). Moreover, $B$ is explicitly constructed from $C$.
\end{lemma}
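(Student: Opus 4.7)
My plan is to define $B$ as the interleaving of two independent copies of $C$: a vector $b \in \F_2^{2n}$ belongs to $B$ iff both its even-indexed subvector $(b_0, b_2, \ldots, b_{2n-2})$ and its odd-indexed subvector $(b_1, b_3, \ldots, b_{2n-1})$ lie in $C$. Equivalently, in the polynomial model, I would define $B$ to be the ideal in $\F_2[x]/(x^{2n}-1)$ generated by $h(x)^2$, where $h(x) \mid x^n - 1$ is the generator polynomial of $C$.

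The first step is to check that this construction is legitimate. Since we are in characteristic two, $x^{2n}-1 = (x^n-1)^2$ and $h(x)^2 = h(x^2)$. Hence from $h(x) \mid x^n-1$ we get $h(x)^2 \mid (x^n-1)^2 = x^{2n}-1$, so $h(x)^2$ is a valid generator polynomial for a cyclic code of length $2n$. By Proposition~\ref{prop:cyclic-code-dim}, $\dim B = 2n - \deg(h(x)^2) = 2n - 2(n-k) = 2k$, which gives rate $k/n$ as claimed.

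The key combinatorial step is to verify the interleaving description of $B$. An arbitrary element of $B$ has the form $h(x^2) a(x) \pmod{x^{2n}-1}$ for some $a(x) \in \F_2[x]$. Splitting $a(x) = a_e(x^2) + x \cdot a_o(x^2)$ according to the parity of exponents, the product becomes
\[ h(x^2) \, a(x) \;=\; (h \cdot a_e)(x^2) \;+\; x \cdot (h \cdot a_o)(x^2). \]
Reducing this modulo $x^{2n}-1$ amounts to reducing $h(y) a_e(y)$ and $h(y) a_o(y)$ modulo $y^n-1$ and then substituting $y = x^2$. As $a_e$ and $a_o$ range independently over $\F_2[y]$, the resulting remainders range independently over all of $C$. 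Consequently, the even-indexed and odd-indexed subvectors of a codeword of $B$ are independently arbitrary elements of $C$, establishing the interleaving description.

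The distance claim is then immediate: any nonzero codeword of $B$ has at least one of its two halves nonzero, and that half has weight at least $d$ as a nonzero codeword of $C$; conversely, placing a minimum-weight codeword of $C$ in one half and zeros in the other shows the distance is exactly $d$. Cyclicity of $B$ is automatic from its definition as an ideal, and the construction is explicitly computable from $h(x)$. I do not foresee a serious obstacle; the delicate point is simply that the entire argument rests on the characteristic-two identities $x^{2n}-1 = (x^n-1)^2$ and $h(x)^2 = h(x^2)$, without which $h(x)^2$ would not divide $x^{2n}-1$ and the doubling trick would collapse.
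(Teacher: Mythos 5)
Your proof is correct, and it builds the same code as the paper --- van Lint's trick of taking the length-$2n$ cyclic code generated by $h^2(x)$, viewed as an interleaving of two copies of $C$ --- but it runs the verification in the opposite direction, and that difference is substantive. The paper defines $B$ combinatorially, as $\varphi(C\times C)$ for an explicit interleaving map $\varphi$ (in fact a slightly twisted one: position $2j$ receives coordinate $2j \bmod n$ of the first copy), so that the dimension and distance are immediate; the work is then to prove that $B$ is cyclic, which is done by showing every $\varphi(a(x),b(x))$ is divisible by $h^2(x)$, splitting it into an even-exponent part $F(x)=f^2(x)$ and an odd-exponent part $G(x)=xg^2(x)$ and invoking that $h$ has no zero root and only simple roots --- the place where oddness of $n$ enters. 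You instead define $B$ as the ideal $(h(x)^2)$ in $\F_2[x]/(x^{2n}-1)$, so cyclicity is free, and you prove the interleaving description by a direct reduction: $h(x^2)a(x)$ with $a=a_e(x^2)+xa_o(x^2)$ reduces mod $x^{2n}-1$ to the parity-interleaving of $h\,a_e \bmod (y^n-1)$ and $h\,a_o \bmod (y^n-1)$, which range independently over $C$. This computation is clean, gives dimension $2k$ and distance exactly $d$ at once, and --- a small bonus --- never uses that $x^n-1$ is squarefree, so your argument works verbatim without the hypothesis that $n$ is odd, which the paper's multiplicity-one step genuinely needs. The only identities you rely on, $x^{2n}-1=(x^n-1)^2$ and $h(x)^2=h(x^2)$ in characteristic two, are exactly right, and your appeal to Proposition~\ref{prop:cyclic-code-dim} for the dimension is redundant but harmless since the bijection with $C\times C$ already gives it.
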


\begin{proof} The following proof is essentially from~\cite{VL}. We include it here for completeness.
Let $n=2^m-1$ and $N=2n$. Define a map $\varphi: \F_2^n \times \F_2^n \rightarrow \F_2^N$ by:
$$\varphi (a_0, a_1, \cdots, a_{n-1}, b_0, b_1, \cdots, b_{n-1}) = (a_0, b_1, a_2, b_3, \cdots, b_{n-2}, a_{n-1}, b_0, a_1, b_2, \cdots, a_{n-2}, b_{n-1}).$$
Clearly $\varphi$ is one to one linear map. Denote $B = \varphi(C \times C)$. Thus, $dim(B) = 2k$, and so the rate of $B$, $r(B)$ satisfies $r(B)  = 2k/2n = k/n = r(C)$. It is also easy to see that the distance of $B$ is the same as the distance of $C$ i.e. $d$. This mean that the relative distance of $B$, $\delta(B) = d/2n = \delta(C)/2$. We only need to show that $B$ is a cyclic code, i.e. that $B$ is an ideal when considered as a subspace of the ring $\F_2[x]/(x^N-1)$.
We claim that $B$ is the ideal generated by $h^2(x)$ where $h(x)$ is the generator of $C$ as an ideal of $\F_2[x]/(x^n-1)$.
Note first that $x^N-1 =x^{2n} -1 = (x^n - 1)(x^n + 1) = (x^n-1)^2$ and hence $h^2(x) | x^N-1$, since $h(x) | x^n-1$.
Also, note that $dim(B) = 2 dim(C) = 2(n-deg(h(x))) = N- deg(h^2(x))$.
Thus, it suffices to show that every element of $C$ is divided by $h^2(x)$.
For $a(x), b(x) \in C$ write

\begin{eqnarray*}
a(x) & = & a_0+ a_ 1x + \cdots + a_{n-1}x^{n-1} \\
     & = & (a_0 + a_2 x^2 + \cdots + a_{n-1}x^{n-1}) + x(a_1 + a_3 x^2 + \cdots + a_{n-2}x^{n-3})\\
     & = & a_{even}(x^2) + x a_{odd}(x^2)
\end{eqnarray*}

Note that $n$ is odd. Similarly $b(x) = b_{even}(x^2) + x b_{odd}(x^2)$.
Then $w(x) = \varphi(a(x), b(x))$ can be written as

\begin{eqnarray*}
w(x) & =  & ( a_{even}(x^2)  + x^{n+1}a_{odd}(x^2)) + (x b_{odd}(x^2) + x^n b_{even}(x^2) ) \\
     & =  & (a(x) + x(x^{n} + 1)a_{odd}(x^2)) + (b(x) +  (x^{n} + 1)b_{even}(x^2))
\end{eqnarray*}

Both terms $F(x)= (a(x) + x(x^{n} + 1)a_{odd}(x^2))$ and $G(x)=(b(x) +  (x^{n} + 1)b_{even}(x^2))$ are divisible by $h(x)$,
since $h(x)$ divides $a(x),b(x)$ and $x^n-1 = x^n + 1$.
Also we have that $F(x) = (a(x) + x(x^{n} + 1)a_{odd}(x^2)) = ( a_{even}(x^2)  + x^{n+1}a_{odd}(x^2))$ contains only even powers of $x$,
and $G(x) = (b(x) +  (x^{n} + 1)b_{even}(x^2)) = (x b_{odd}(x^2) + x^n b_{even}(x^2) )$ contains only odd powers of $x$.
This implies that both terms are actually divisible by $h^2(x)$. Indeed, $F(x) = f^2(x)$ for some polynomial $f(x) \in \F_2[x]$ and $G(x) = xg^2(x)$ for some polynomial $g(x) \in \F_2[x]$.
Now, $h(x) | F(x)$ and $h(x) |G(x)$, also, $0$ is not a root of $h(x)$ and all the roots of $h(x)$ are of multiplicity one (here we use the fact that $n$ is odd). Thus, we deduce that $h(x)$ divides $f(x)$ and $g(x)$, which implies that $h^2(x)$ divides $F(x)$ and $G(x)$ and hence $w(x)$.

%
%Hence we found that for every $w \in C'$ $g^2(x)$ divides $w(x)$. In order to complete the proof that $g^2(x)$ is a generator of $C'$ (and hence $C'$ is cyclic) we need to show that $g^2(x)$ is associated to a codeword of $C'$. To see this consider $g = (g_0, g_1, \cdots, g_{n-1}) \in C$ where $g(x)$ is a generating polynomial for $C$. Let $a = (a_0, a_1, \cdots, a_{n-1}) \in C$ and $c = (c_0, c_1, \cdots, c_{n-1}) \in C$
%be such that $a=c=g$, and consider the following codeword of $C'$
%$$w' := (a_0, b_1, a_2, \cdots, b_{n-2}, a_{n-1}, b_0, a_1, b_2, \cdots, a_{n-2}, b_{n-1}) = (g_0, 0 , a_2, \cdots, 0, g_{n-1}, 0 , g_1, 0, \cdots, g_{n-2}, 0). $$
%Note that $w'$ is the codeword associated with the polynomial $g^2(x) = (\sum_{i=0}^{n-1}g_i x^{i} ) = \sum_{i=0}^{n-1}g_i x^{2i}$.
%Hence we obtained that $g^2(x)$ is associated to a codeword of $C'$, and $g^2(x)$ divides every codeword of $C'$, thus $C'$ is a cyclic code generated by $g^2(x)$. %This implies again that he distance of $C'$ is at least $d$, i.e. $d/2n$ normalized distance.
\end{proof}

In summary we have proved:

\begin{thm}[A highly symmetric LDPC good code]\label{theorem:main-with-params}
Let $q=4093$, $\alpha \in \N$, $G=PSL_2(q^{\alpha})$ or  $G=PGL_2(q^{\alpha})$, $T$ the group $T=\F^*_{q^2}/\F^*_{q} \leq PGL_2(q)$, $\gamma \in G$ as in Section~\ref{section:ramanujan}, $S=\{ t \gamma t^{-1} | t \in T \}$. Let $B \in \F^S$ be the code defined in Section~\ref{section:cyclic}, with $a \geq 8$. Then $C(G,S,B)$  is a linear code of rate at least $\frac{2}{a}$,
normalized distance at least $$ \left(\frac{\frac{a-2}{2a \log q} - \frac{2\sqrt{q}}{q+1} }{1 - \frac{2\sqrt{q}}{q+1} }\right)^2.$$
The code is symmetric with respect to the action of the
semi direct product group $G \rtimes T$, and $C^{\bot}$ is generated by  the orbit of a single constraint of weight at most $q+1$.
In particular, for $\alpha \rightarrow \infty$, this is a family of highly symmetric LDPC good codes.

%Let $G$, $T$, $S$, $B$ be as above. Let $e > 2$ be a constant. The Cayley code
%$Cay(G,S,B)$ has length $n=|G| \times |S|/2$, rate at least $\frac{2}{e}$,
%normalized distance greater than $$ (\frac{\frac{2^m -1 } {m (e/(e-2))} - 2\sqrt{2^{m+1} -3}}{(2^{m+1}-2) -2\sqrt{2^{m+1} -3} })^2.$$ I.e. a constant normalized distance for $m \geq 10$ and $2^{m+1} - 3$ being a prime power. The code is transitive and invariant under the
%semi direct product group $G \rtimes T$, and it is generated by one
%generator of weight at most $q+1$.
\end{thm}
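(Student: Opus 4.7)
The plan is to assemble the three preceding theorems—Theorem~\ref{thm:Cayley-codes-detailed}, Theorem~\ref{theorem:ramanujan}, and Theorem~\ref{theorem:cyclic-code-detailed}—into a single construction and check that all hypotheses line up with the parameters specified in the statement. First I would invoke Theorem~\ref{theorem:ramanujan} with $q=4093$ (a prime, in particular a prime power) and arbitrary $\alpha \in \N$ (so $q^{\alpha} > 17$): this produces the group $G \in \{PSL_2(q^{\alpha}), PGL_2(q^{\alpha})\}$, a symmetric generating set $S$ of size $q+1$ given by the $T$-orbit $S = \{ t \gamma t^{-1} \mid t \in T\}$ of the element $\gamma$ described in Section~\ref{section:ramanujan}, and the Ramanujan bound $\lambda(Cay(G,S)) \leq \frac{2\sqrt{q}}{q+1}$. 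This also supplies the action of $T = \F^*_{q^2}/\F^*_q$ on $G$ (via conjugation inside the ambient group, which normalizes $\tilde\Gamma$ and hence $\Gamma(I)$), so that $S$ is precisely a $T$-orbit.

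Second, I would apply Theorem~\ref{theorem:cyclic-code-detailed} with $m=11$, i.e.\ $q+1 = 2(2^{11}-1) = 2^{12}-2$, and the choice $a \geq 8$. This yields an explicit binary cyclic code $B \subseteq \F_2^{q+1}$ of rate $r(B) \geq \tfrac{1}{2} + \tfrac{1}{a}$ and normalized distance $\delta(B) \geq \frac{a-2}{2a\log q}$, together with the built-in guarantee (stated in the theorem) that for $a \geq 8$ one has $\delta(B) > \frac{2\sqrt{q}}{q+1} \geq \lambda(Cay(G,S))$. Being cyclic of length equal to $|T|=q+1$, the code $B$ is automatically $T$-single-orbit symmetric: $B^{\perp}$ is a cyclic submodule of the group algebra $\F_2[T]$ (as explained in Section~\ref{sec:cayley}, all cyclic codes are single-orbit symmetric under the acting cyclic group).

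Third, I would check that the hypotheses of Theorem~\ref{thm:Cayley-codes-detailed} all hold with this data: $S = S^{-1}$ and $S$ generates $G$ (from the construction in Section~\ref{section:ramanujan}); $T$ acts on $G$ with $S$ as a $T$-orbit; the eigenvalue bound $\lambda \leq \tfrac{2\sqrt{q}}{q+1}$; and $r(B) > \tfrac{1}{2}$, $\delta(B) > \lambda$, $B$ is $T$-single-orbit symmetric. Plugging in the bounds gives the claimed numerical estimates: the rate of $C(G,S,B)$ is at least $2r(B)-1 \geq \tfrac{2}{a}$, and the normalized distance is at least
\[
\left(\frac{\delta(B)-\lambda}{1-\lambda}\right)^{\!2} \;\geq\; \left(\frac{\frac{a-2}{2a\log q} - \frac{2\sqrt{q}}{q+1}}{1 - \frac{2\sqrt{q}}{q+1}}\right)^{\!2}.
\]
Theorem~\ref{thm:Cayley-codes-detailed} also delivers the symmetry under $H = G \rtimes T$ and the fact that $C^{\perp}$ is generated by a single $H$-orbit of a vector of weight at most $|S|=q+1$.

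Finally, letting $\alpha \to \infty$ sends $|G|$ (and hence the block length $|G||S|/2$) to infinity while $q+1$ stays fixed, producing an infinite family of codes of constant rate, constant normalized distance, and single-orbit symmetry with a defining vector of bounded weight—precisely the highly symmetric LDPC good property. The only non-bookkeeping step is the purely numerical verification that the specific pair $(q,a) = (4093, 8)$ makes $\delta(B) > \lambda$ and leaves positive the numerator of the distance expression; this is exactly the check already absorbed into Theorem~\ref{theorem:cyclic-code-detailed}, so no additional work is needed beyond citing it. The rest is a straightforward concatenation of the previously proven statements.
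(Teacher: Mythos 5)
Your proposal is correct and matches the paper's own argument: the paper proves this summary theorem precisely by assembling Theorem~\ref{theorem:ramanujan} (with $q=4093$), Theorem~\ref{theorem:cyclic-code-detailed} (with $m=11$, $a\geq 8$), and Theorem~\ref{thm:Cayley-codes-detailed}, exactly as you do, including the observation that cyclicity of $B$ gives $T$-single-orbit symmetry and that $\alpha\to\infty$ with $q$ fixed yields the bounded-weight (LDPC) family.
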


\newpage
\remove{
\appendix

\section{Cyclic Codes}
We know move to discuss properties of cyclic codes in $\F_2^n$ for $n = 2^m -1$ for some $m \in \N$. These are the most studied cyclic codes.
Let $E = \F_{2^m}$ be the field of order $2^m$. Let $w \in E$ be a primitive element in $E$. Every $\alpha \in E^* = E - \{0\}$ satisfies $\alpha^n =1$. In fact, $E^*$ is exactly the set of all the roots of $x^n-1$.
For $\alpha \in E^*$ denote $m_{\alpha}(x)$ the minimal polynomial of $\alpha$ over $\F_2$, i.e., it is the polynomial of minimal degree in $\F_2[x]$ satisfying $m_{\alpha}(x)=0$. As $\alpha \in E$, which is extension of degree $m$, we know that $\deg(m_{\alpha}(x)) \leq m$.
Also, we have $m_{\alpha}(x) | x^n-1$ since $\alpha$ is a root of $x^n-1$, and for every $\alpha \in E^*$, $m_{\alpha}(x) = m_{\alpha^2}(x)$.

Thus, $x^{n} - 1$ decomposes into a product of different irreducible polynomials over $\F_2[x]$.
I.e.,
$$x^{n} - 1 = \prod_{i=1}^t g_i(x)$$
$g_i(x) \in \F_2[x]$, $1 \leq i \leq t$, irreducible polynomials.  We can identify the elements of $R$ with their canonical representatives, which are polynomials of degree at most $n-1$ in $\F_2[x]$. For polynomials $a(x) = \sum_{i=0}^{n-1}a_i x^i, b(x) =\sum_{i=0}^{n-1}b_i x^i$ in $R$, we define

$$ a(x) * b(x) = \sum_{s=0}^{n-1} d_s x^s = \sum_{s=0}^{n-1}  \sum_{i=0}^{n-1} a_i b_{\{s-i \mbox{ mod } (n)\}} x^s.$$

By saying that $C$ is generated (as an ideal) by one element $g(x) \in \F_2[x]$, we mean $C = \{ g(x) * f(x) | f(x) \in \F_2[x] \}.$

We can consider $R$ using the Chinese Remainder Theorem, namely,

$$ R = {\F_2[x]}/{(x^{n} -1)} =  {\F_2[x]}/{(\prod_{i=1}^t g_i(x))} \cong \prod_{i=1}^{t} ({\F_2[x]}/{g_i(x)})
= \prod_{i=1}^{t} \F_{2^{deg (g_i(x))}} = \prod_{i=1}^{t} \F(i).$$

In the above $\F[i]$ stands for the field of order $2^{deg (g_i(x))}$. An ideal $C$ in $R$ is therefore a product of a subset of the fields, i.e. there exists $J \subset \{1, \cdots, t \}$ such that $$C = C_J \{ (c_1, \cdots, c_t | c_j  = 0 \mbox{ for every } j \in J\}.$$ Every ideal is a cyclic code.

A {\em canonical generator} for $C$ is a generating polynomial $h(x) \in \F_2[x]$ of degree at most $n-1$, such that all the roots of $h(x)$ are in $E$ and are of multiplicity one. (i.e. it splits in $E$). In fact it is unique and
$$h(x) = \prod_{i \in J} g_i(x).$$

Indeed, the image of $h(x)$ in $\F(i)$ is non zero iff $i \notin J$, and hence it generates the product of all $\F(i)$'s, $i \notin J$, and it is divisible by $g_i(x)$ for $i \in J$.

The group $Gal(E/\F_2)=Gal(\F_{2^m}/\F_2)$ acts transitively on the roots of each $g_i(x)$. This group is cyclic of degree $\ell_i = deg (g_i(x))$. Hence the roots of $g_i(x)$ are $\{\omega_i^{2^0}, \omega^{2^1},\dots ,\omega^{2^{\ell_i - 1}}\}$ for some $\omega_i \in E$ such that $\omega_i^{2^{\ell_i}} = \omega_i$. Thus, $g_i(x)$ has the following form;

$$g_i(x)  = (x - \omega)(x-\omega^2)(x-\omega^4)\cdots(x-\omega^{2^{\ell_i -1}}) \in \F_2[x].$$

That is, $deg(g_i(x)) \leq m$. For a polynomial $b(x) = \sum_{i=0}^{n-1}b_i x^i \in R$ (where we identify the elements of $R$ with their canonical representatives, which are polynomials of degree at most $n-1$ in $\F_2[x]$), we denote by $b(\frac{1}{x})$ the following polynomial, which is of degree at most $n-1$ in $R$;

$$b(\frac{1}{x}) =  \sum_{i=0}^{n-1}b_i (\frac{1}{x})^i = (\sum_{i=0}^{n-1}b_i (\frac{1}{x^i}))x^{n} = \sum_{i=0}^{n-1}b_i x^{n-i}.$$

Note that from the definition of $b(\frac{1}{x})$, it follows that if $d(x)=b(\frac{1}{x})$, then $d(\frac{1}{x}) = b(x)$.

For polynomials $a(x) = \sum_{i=0}^{n-1}a_i x^i, b(x) = \sum_{i=0}^{n-1}b_i x^i$ in $R$,
$a(x) * b(x) = 0 $ iff for every $0 \leq s \leq n-1$, $d_s = \sum_{i=0}^{n-1} a_i b_{s-i \mbox{ mod } (n)} = 0$ iff $a(x)$ is orthogonal to all the cyclic shifts of $b(\frac{1}{x})$. Thus, $a(x)$ is orthogonal to all the cyclic shifts of $b(x)$ iff

$$a(x) * b(\frac{1}{x}) = 0 \mbox{ iff } a(\frac{1}{x}) * b(x) = 0.$$

Consider $$h^{\bot}(x) \eqdef \frac{x^n - 1}{\prod_{\alpha \in E*, h(\alpha) = 0} (1-\alpha x)}.$$ We first observe that $h^{\bot} \in \F_2[x]$.
Indeed, $$h^{\bot}(x)= \frac{x^{n} - 1}{\prod_{\alpha \in E^*, h(\alpha) = 0} (-\alpha)(x-\frac{1}{\alpha })},$$
where $\prod_{\alpha \in \F^*_N, h(\alpha) = 0} (-\alpha) = 1$, since this is the coefficient of $x^0$ in $h(x)$.
Thus, $$h^{\bot}(x) = \prod_{\beta \in E^*, h(\frac{1}{\beta} ) \neq 0}(x-\beta).$$

The set $\{\beta | \beta \in E^*, h(\frac{1}{\beta} ) \neq 0\}$ is invariant under $Gal(E/\F_2)$ and hence $h^{\bot}(x) \in \F_2[x]$.
The description of $h^{\bot}(x)$ implies that $\mbox{ deg }h^{\bot}(x) = (n) - \mbox{ deg }h(x)$. Moreover, $h^{\bot}(x)$ is orthogonal to $h(x)$ and its cyclic shifts. This holds since it is implied by the fact that
$$h(x) * h^{\bot}(\frac{1}{x}) =x^{n}-1 = 0_R.$$
Indeed, by the definition of $h^{\bot}(x)$, $\gamma \in E^*$ is a root of $h^{\bot}(\frac{1}{x})$ iff $\frac{1}{\gamma}$ is a root of $h^{\bot}(x)$ iff $\gamma$ is not a a root of $h(x)$. Thus, the roots of $h^{\bot}(\frac{1}{x})$ are exactly all the elements of $E^*$ that are non-roots of $h(x)$. Thus,  $h(x) * h^{\bot}(\frac{1}{x}) = x^{n}-1 = 0$, and in particular $h^{\bot}(x) \in C^{\bot}$.

We next observe that $h^{\bot}(x)$ is the canonical generator for $C^{\bot}$. Indeed, $\mbox{ dim }C^{\bot} = n - \mbox{ dim }C = n - (n - \mbox{ deg }h(x))=\mbox{ deg }h(x)$. The dimension of the ideal $I$ generated by $h^{\bot}(x)$ is $\mbox{ dim }I = n - \mbox{ deg }h^{\bot}(x) =  n - (n - \mbox{ deg }h(x))=\mbox{ deg }h(x)$ and as $h^{\bot}(x) \in C^{\bot}$ we get that $C^{\bot}$ is generated by $h^{\bot}(x)$.
Since $C$ is the ideal generated by $h(x)$, then $C$ is the set of all polynomials of degree at most $n-1$ in $\F_2[x]$ that their roots include the roots of $h(x)$. Similarly, $C^{\bot}$ is the set of all polynomials of degree at most $n-1$ in $\F_2[x]$ that their roots include the roots of $h^{\bot}(x)$.

For the primitive element $w \in E^*$ identify the set $[0,1,2, \cdots, n-1]$ with the consecutive powers of $w$, i.e. with
$[w^0,w^1, w^2, \cdots, w^{n-1}]$. For $p(x) \in E[x]$, consider a function $A:E \ra \F_2$ such that $A(x) = \trace(p(x))$,
where $\trace:\F_{N}\ra \F_2$ is the function defined as $\trace(x)=x+x^2+\ldots+ x^{2^{n-1}}$. For a polynomial $a(x) =\sum_{i=0}^{n-1}a_ix^i \in C$, we ask what is the {\em generating function} $A$, such that $A(w^k) = a_k$. In the following we show that for $a(x) \in C$, $A(x) = \trace(p(x))$ where $p(x) \in E[x]$ is a polynomial whose non-zero coefficients are included in the roots of $h^{\bot}(x)$, i.e. in the non-roots of $h(\frac{1}{x})$, when $E^*$ is identified as before with the set $[0,1,2, \cdots, n-1]$.

Let $$A(x) \eqdef \sum_{j=1}^{n} a(w^j) x^{n-j}.$$

We first show that $A(w^k) = a_k$. Indeed

$$A(w^k) = \sum_{j=1}^{n} a(w^j) (w^k)^{n-j} = \sum_{j=1}^{n} (\sum_{i=0}^{n-1} a_i w^{ij}) (w^k)^{n-j} = \sum_{i=0}^{n-1}a_i\sum_{j=1}^{n}w^{ij} \cdot w^{-kj} = \sum_{i=0}^{n-1}a_i\sum_{j=1}^{n}(w^{i-k})^{j} = a_k.$$
The last equation holds since for $i=k$ we sum $n$, $1$'s and get $1$ and this is multiplied by $a_k$, while for $i \neq k$, $a_i$ is multiplied by $\sum_{j=1}^{n}(w^{i-k})^{j} = w^{i-k} \sum_{j=0}^{n-1}(w^{i-k})^{j} = w^{i-k} \cdot \frac{1 - (w^{i-k})^n}{1 - w^{i-k} } = 0$, since $w^{i-k} \neq 1$ is a root of $1-x^n$.

We next show that $A(x) = \trace(p(x))$ where $p(x) \in E[x]$ is a polynomial whose non-zero coefficients are included in the roots of $h^{\bot}(x)$, i.e. in the non-roots of $h(\frac{1}{x})$.
Indeed, $$A(x) = \sum_{j=0}^{n-1} a(w^{-j}) x^{j} = \sum_{w^j ; h^{\bot}(w^j) = 0 }a(w^{-j}) x^{j}. $$
Also, since $A(w^k) = a_k \in \F_2$ and $\trace$ operator is linear and invariant under the group $Gal(E/\F_2)$ we have
$$A(x) = \trace(\sum_{w^j ; h^{\bot}(w^j) = 0 }a(w^{-j}) x^{j}) =\sum_{w^j ; h^{\bot}(w^j) = 0 } \trace(a(w^{-j}) x^{j})).$$

In general, it is not so easy to estimate the distance of the cyclic code $C$ from its generating polynomial $h(x)$. However, this is possible the following special case. For $1 \leq r \leq n$, denote $$h_r(x) = l.c.m \{m_{w^i}(x) | 1 \leq i \leq r \}.$$ Since $m_\alpha(x) | x^n -1$, also
$h_r(x) | x^n-1$. As $deg (m_{\alpha}(x)) \leq m$, $deg(h_r(x)) \leq rm$. In fact, for every $\alpha \in E^*$, $m_{\alpha}(x) =m_{\alpha^2}(x)$ and hence $deg(h_r(x)) \leq rm/2$ (for $r \geq 4$). The polynomial $h_r(x)$ gives rise to an ideal (i.e. to a linear cyclic code) $C_r$ called the $BCH(m,r)$ code.

As the roots of $BCH(m,r)$ are $w^1, \cdots w^r$  its dual code
is defined by evaluations of traces of polynomials whose degrees are contained in the set of roots of $BCH(m,r)$, i.e. traces of polynomials in $E[x]$ of degree at most $r$. Thus,
$$BCH(m,r)^{\bot}= \{ \langle \trace(f(\alpha)) \rangle_{\alpha \in E^*} | f \in E[x], \deg(f) \leq r\}$$

}

%\iffalse

\end{document}